\newcommand{\beq}{\begin{equation}}
\newcommand{\eeq}{\end{equation}}
\newcommand{\beqq}{\begin{equation*}}
\newcommand{\eeqq}{\end{equation*}}
\newcommand{\ei}{\end{itemize}}
\newcommand{\bi}{\begin{itemize}}
\newcommand{\ee}{\end{enumerate}}
\newcommand{\be}{\begin{enumerate}}
\newtheorem{prop}{Proposition}
\newtheorem{lemma}{Lemma}
\newtheorem{corol}{Corollary}
\newtheorem{algorithm}{Algorithm}
\theoremstyle{remark}
\newtheorem{remark}{Remark}
\newcommand{\argmax}[1]{\arg{\hbox{$\underset{#1}{\max}\,$}}}
\begin{document}

\sloppy

\title{FEAT: Fair Coordinated Iterative \\Water-Filling Algorithm}

\author{~Majed~Haddad\IEEEauthorrefmark{1},~Piotr~Wiecek\IEEEauthorrefmark{2},~Oussama~Habachi\IEEEauthorrefmark{3},~Samir~M.~Perlaza\IEEEauthorrefmark{4} and~Shahid M. Shah\IEEEauthorrefmark{5}\\
\IEEEauthorrefmark{1}University of Avignon, France\\
\IEEEauthorrefmark{2}Wroc{\l}aw University of Science and Technology, Wroc{\l}aw, Poland\\
\IEEEauthorrefmark{3}University of Clermont Auvergne, France\\
\IEEEauthorrefmark{4}INRIA, Sophia Antipolis, France\\
\IEEEauthorrefmark{5}National Institute of Technology, Srinagar, India\\
}

\maketitle

\begin{abstract}
In this paper, we consider a perfect coordinated water-filling game, where each user transmits solely on a given carrier. The main goal of the proposed algorithm (which we call FEAT) is to get close to the optimal, while keeping a decent level of fairness. The key idea within FEAT is to minimize the ratio between the best and the worst utilities of the users. This is done by ensuring that, at each iteration (channel assignment), a user is satisfied with this assignment as long as he does not loose much more than other users in the system. It has been shown that FEAT outperforms most related algorithms in many aspects, especially in interference-limited systems. Indeed, with FEAT we can ensure a near-optimal, fair and energy efficient solution with low computational complexity. In terms of robustness, it turns out that the balance between being nearly globally optimal and good from individual point of view seems hard to sustain with a significant number of users.
Also notice that, in this regard, global optimality gets less affected than the individual one, which offers hope that such an accurate water-filling algorithm can be designed around competition in interference-limited systems.

\end{abstract}

\vspace{.5cm}
\begin{IEEEkeywords}
Rate maximization, power control, water-filing, game theory, Nash equilibrium, disjoint carrier water-filling game, fairness, energy efficiency, robustness.
\end{IEEEkeywords}

\section{Introduction}\label{sec:intro}

In this paper, we tackle the well-known water-filling solution which solves the
problem of maximizing the information rate between the input and the output of a channel composed of several subchannels (such as a frequency-selective channel, a time-varying
channel) with an average power
constraint at the transmitters \cite{Gallager68,cover}. More specifically, we consider a multiple access channel (MAC), where $N$ transmitters wish to communicate with a common receiver over $K$ independent channels.

The purpose of this paper is twofold: On the one hand, it provides a nearly-optimal though low-complexity implementation in practice. To be more precise, it provides a practical algorithm with $O(NK^2\log(K))$ complexity. On the other hand, it guarantees fairness among the users by trying to assign to every user in the system at least one good channel.

\section*{Related Work}

Power control is well-known to be useful for the information rate enhancement. In fact, the main aim of power control is to select power levels corresponding to fading gain such that the sum information rate is optimized. Water-filling based power control algorithms are among the possible algorithms for power allocation. There have been many works on water-filling, but most of them do not consider fairness issue. They rather consider transmission rate-type utilities \cite{He13}. Water-filling based power control policies have been widely tied with the use of orthogonal frequency division multiplexing (OFDM) in wireless communications to mitigate the multiple path effect. As an example, water-filling has been used in the context of OFDM cognitive radio systems \cite{Qi12}, LTE systems \cite{Zhang14} and LTE-Advanced systems \cite{Zhang15}. Authors of \cite{Hasan20} proposed a Nash equilibrium based water-filling algorithm for a Multiple Input Multiple Output (MIMO) system in underground tunnel.
An iterative water-filling algorithm was proposed for non-orthogonal multiple access (NOMA) \cite{Cai17}. \cite{Youssef17} proposed a low complexity water-filling power allocation algorithm applied to the proportional fair scheduler for downlink NOMA systems. Authors of \cite{Gu17} proposed a transmission power allocation policy for HetNets using a water-filling algorithm.
In \cite{Fengfeng21}, authors proposed an energy-efficient power allocation for D2D communications underlaying cellular networks using a quasi-water-filling approach. 

In \cite{palomar05}, authors proposed an algorithm to evaluate a water-filling solution by considering a general water-filling formulation with multiple water-levels and multiple constraints. \cite{Jindal05} considered the problem of maximizing sum rate of a multiple-antenna Gaussian broadcast channel. The problem of optimizing power allocation in order to maximize mutual information has been solved in \cite{Lozano06} using the mercury water-filling principle. A generalization of
the mercury water-filling algorithm for multiple-input multiple output
(MIMO) Gaussian channels, while taking into account
the interference among inputs, was given by \cite{Perez10}. In \cite{Cao20}, authors proposed a water-filling based power allocation for Gaussian channels and approximately Gaussian inputs. In \cite{gaoning08}, authors extended the mercury water-filling policy to the multi-user context. The key idea behind mercury water-filling is to increase/decrease the water-levels to allow good/bad users to transmit more often than in the case of {\it{classical}} water-filling. Note that mercury water-filling enables us to design an efficient power allocation. However, the difficulty of obtaining closed-form analytical expressions, in the multiple users multiple channel model, often makes computing the mercury water-filling solution challenging. The algorithm introduced in this paper called FEAT is similar to mercury water-filling, but the way it is done is not by changing the water-filling level, but eliminating/adding available carriers on which users could have transmitted in the case without FEAT.

This paper, instead of dealing with {\it{classical}} multi-user information rate optimization problems resulting in significantly complicated water-filling solutions that include multiple water-levels and multiple constraints, has considered a different way to switch off transmission in users' links which do not contribute enough to the information rate to outweigh the interference degradation caused by them to the rest of the system. This is done by iteratively limiting the number of available carriers for each user instead of changing the water-levels as it is done in the above-mentioned water-filling problems.

The main goal of FEAT is to maximize the information rate, while keeping a decent level of fairness. More specifically, FEAT has the following benefits:
\bi
\item It minimizes the ratio between the best and the worst utilities of the users. We do that by ensuring that, at each iteration (channel assignment), a user is satisfied with his assignment as long as he does not loose much more than other users do,
\item It tries to give each user his best carrier (the one that gives him the highest utility) whenever it is possible, \emph{i.e.}, there are enough channels for every user and this carrier is not taken by another user,
\item Users with the lowest utility are given more opportunity to add new carriers to their lists in the next iteration of the algorithm.\\
\ei

%organisation of the paper...
The remainder of the paper is organized as follows: The system model is introduced in Section \ref{sec:model}, then the game theoretic formulation is detailed in Section \ref{sec:game-model} and the achievable rates in Section \ref{sec:opt}. The proposed solution is presented in Section \ref{sec:game-solution}, followed by properties of FEAT in Section \ref{sec:theory}. Numerical results are shown in Section \ref{sec:simul}. Finally, concluding remarks are drawn in Section \ref{sec:conc}.

\section{The Multiple Access System Model}
\label{sec:model}
We consider an uplink communication scenario of $K$ parallel Gaussian MAC with $N$ users in the system. These
parallel channels can be treated as different subchannels in a
multi-carrier system. At the transmitter side, the signal
$x_n^k$ transmitted on the $k^{th}$ subcarrier by user $n$ to the base
station (BS) is the normalized (\emph{i.e.}, unit-power) symbol $s_n^{k}$
balanced by some power coefficient $p_n^k: x_n^k = \sqrt{p_n^k} \cdot s_n^k$,
where the symbols $s_n^k$ are usually drawn from Gaussian distribution. At the receiver side,
$y^k$ represents the received signal on the $k^{th}$ channel, which is the sum of the contributions received from all users. Hence, the channel model can we written as
\begin{align}
y^k=\sum_{n=1}^N h_n^k x_n^k + z^k,
\end{align}
where $z^k\sim \mathcal{N}(0,\sigma^2)$ is the additive white Gaussian noise and $h_n^k$ is the fading channel of user $n$ transmitting on carrier $k$ modeled as independent identically distributed (i.i.d.) variables over the Rayleigh fading coefficients. We further assume that the overall bandwidth $W$ can be divided into an arbitrary number of narrow-band carriers ($K\geq2$), the subcarrier is narrow enough to undergo flat fading, and the channel gain is quasi-static fading for which the channels are constant within a given time slot but change independently from one slot to another. Without the constraint of exclusive
assignment of each carrier for users, we generally formulate the problem of
spectral efficiency maximization by allowing that a carrier could be
shared by all users. 
Accordingly, for any user $n \in \{1,2.\ldots, N\}$, the instantaneous per user information rate in bits/s/Hz, also called
\emph{spectral efficiency} is expressed as
\begin{equation}
u_{n} (\textbf{p}) = \sum_{k=1}^{K}\log_2\left(1+\frac{g_{n}^k p_{n}^k}{\sigma^2+ \displaystyle\sum_{\substack{m=1 \\
m\neq n}}^{N} g_{m}^k p_{m}^k}\right),%; \qquad n = 1,\ldots,N
\label{eq:util_se}\end{equation}
where $g_n^k=|h_n^k|^2$ is the fading channel gain of user $n$ transmitting on carrier $k$. It follows from the above expression that the strategy chosen by a user affects the performance of other users in the
network through multiple-access interference. Now, we can define the optimization problem we are studying as 
\begin{equation}
\{p_{n}^{1*},...,p_{n}^{K*}\} = \argmax{{p_{n}^1,...,p_{n}^K}}
\,u_{n} (\textbf{p}) \label{eq:opt_pb}\end{equation}
subject to
\begin{equation}
\sum_{k=1}^{K} p_n^k = \overline{P_n},
\label{eq:pbudget}\end{equation}
where $\overline{P_n}$ denotes the power budget of user $n$.

\section{The game theoretic formulation}\label{sec:game-model}

The instantaneous information rate determines the
maximum achievable rate over all fading states without a delay constraint. In
this work, we allocate transmit powers to each user (over a total power
budget constraint) in order to maximize the individual transmission rate in (\ref{eq:util_se}). In fact, when
channel state information is made available at the transmitters, users know
their own channel gains and thus they will adapt their transmission strategy
relative to this knowledge. The solution of the optimization problem in (3) is the
well-known \emph{water-filling} allocation \cite{cover} expressed
by\footnote{$(x)^+=\textrm{max}(0,x)$.}:
\begin{equation}
p_n^{k*} =\left(
\frac{1}{\lambda_{0}}-\frac{\sigma^2+\sum_{m\neq n}p_m^k g_m^k }{g_n^k}\right)^+,\\
\label{eq:wf}\end{equation} where $\lambda_{0}$ is
Lagrange's multiplier corresponding to user $n$'s average power constraint in (\ref{eq:pbudget}).

The situation presented above can be described as an $N$-person non-cooperative game with
player utilities $u_n$, $n=1,\ldots,N$ and strategy sets $\mathcal{P}_n=\{ p_n=(p^1_n,\ldots,p^K_n): \sum_{k=1}^{K} p_n^k = \overline{P_n}\}$, $n=1,\ldots,N$.

The basic solution concept used in non-cooperative game theory is that of Nash
equilibrium (NE) \cite{nash50}. It is a vector of strategies (referred to
hereafter and interchangeably as actions) $\textbf{p}^{NE} =
(p_1^{NE},\ldots,p_N^{NE})$, one for each player, such that no player has an
incentive to unilaterally change his strategy, \emph{i.e.},
\begin{equation}
\label{eq:Nash}
u_i(\textbf{p}^{NE})\geq u_i(p_{1}^{NE},\ldots,p_{i-1}^{NE},p_i,p_{i+1}^{NE},\ldots,p_{N}^{NE})\quad\mbox{for every action }p_i \neq
p_i^{NE}. 
\end{equation}

If for some constant $\epsilon>0$ a weaker version of (\ref{eq:Nash}):
$$(1+\epsilon)u_i(\textbf{p}^{NE})\geq u_i(p_{1}^{NE},\ldots,p_{i-1}^{NE},p_i,p_{i+1}^{NE},\ldots,p_{N}^{NE})\quad\mbox{for every action }p_i \neq
p_i^{NE}$$
is satisfied, we say that $\textbf{p}^{NE}$ is an $\epsilon$-Nash equilibrium.

In \cite{Samir13}, it has been shown that the game is an exact potential game and possesses a unique NE in pure strategies. The following lemma describes such equilibrium. 

\begin{lemma} Let the power allocation profiles $p_1=(p^1_1,p^2_1,\ldots,p^K_1) \in \mathcal{P}_1, p_2=(p^1_2,p^2_2,\ldots,p^K_2)\in \mathcal{P}_2, \ldots, p_N=(p^1_N,p^2_N,\ldots,p^K_N) \in \mathcal{P}_N$ form a Nash equilibrium of the game. Then, for all $n \in \lbrace 1,2, \ldots, N \rbrace$ and for all $k \in \lbrace 1,2, \ldots, K \rbrace$, it follows that $p_n^k$ satisfies the equality in \eqref{eq:wf}.
\end{lemma}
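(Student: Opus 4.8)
The plan is to exploit the defining property of a Nash equilibrium together with the observation that each player's best-response problem is a concave program whose Karush--Kuhn--Tucker (KKT) conditions are precisely the water-filling equations \eqref{eq:wf}.

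First I would fix an equilibrium profile $(p_1,\ldots,p_N)$ and isolate the problem faced by a single player $n$. By \eqref{eq:Nash}, with the strategies $p_m$, $m\neq n$, held fixed, the vector $p_n$ is a global maximizer of $p_n\mapsto u_n(p_n,p_{-n})$ over $\mathcal{P}_n$. Setting $b_n^k := \sigma^2 + \sum_{m\neq n} g_m^k p_m^k$, which is a strictly positive constant because $\sigma^2>0$ and the other players' powers are frozen, the objective reads $\sum_{k=1}^K \log_2\!\big(1 + g_n^k p_n^k/b_n^k\big)$. This is a separable sum of terms each concave in its own coordinate, hence concave on $\mathcal{P}_n$; and $\mathcal{P}_n$ is cut out by the single affine equality $\sum_k p_n^k = \overline{P_n}$ together with the sign constraints $p_n^k\ge 0$ implicit in the model. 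So player $n$'s best response is the solution of a concave maximization over a polyhedron, for which the KKT conditions are necessary and sufficient (the constraints being affine, no constraint qualification is needed).

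Next I would write out the KKT system. Introducing a multiplier $\lambda_0$ for the power constraint and multipliers $\mu_k\ge 0$ for $p_n^k\ge 0$, stationarity gives
\[
\frac{1}{\ln 2}\cdot\frac{g_n^k}{b_n^k + g_n^k p_n^k} \;=\; \lambda_0 - \mu_k, \qquad k=1,\ldots,K,
\]
supplemented by complementary slackness $\mu_k p_n^k = 0$ and primal feasibility. On a carrier with $p_n^k>0$ we have $\mu_k=0$, and solving the stationarity equation yields $p_n^k = \frac{1}{\lambda_0} - b_n^k/g_n^k$ once the factor $\ln 2$ is absorbed into $\lambda_0$; on a carrier with $p_n^k=0$ the constraint $\mu_k\ge 0$ forces $\frac{1}{\lambda_0} - b_n^k/g_n^k \le 0$. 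The two cases merge into $p_n^k = \big(\frac{1}{\lambda_0} - (\sigma^2 + \sum_{m\neq n} p_m^k g_m^k)/g_n^k\big)^+$, which is exactly \eqref{eq:wf}, with $\lambda_0$ determined by $\sum_k p_n^k = \overline{P_n}$. Since $n$ was arbitrary, this proves the lemma.

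I do not anticipate a serious obstacle; the statement amounts to the familiar fact that a best response in this game is a water-filling allocation. The only points deserving a little care are verifying the separable concavity of $u_n$ in $p_n$ so that the KKT conditions genuinely characterize the maximizer (rather than being merely necessary), treating the truncation $(\cdot)^+$ through complementary slackness instead of differentiating at a kink, and recording that $b_n^k>0$ so that every quantity in the derivation is well defined.
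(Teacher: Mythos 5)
Your proof is correct and follows essentially the same route as the paper: the paper's one-line argument simply observes that at a Nash equilibrium each $p_n$ must be a best response and that the best response is the water-filling allocation (citing the classical result), which is exactly the fact you establish. The only difference is that you supply the KKT derivation explicitly rather than citing it, which is a harmless (and arguably welcome) elaboration.
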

\begin{proof}
The proof follows by noticing that for all $n \in \lbrace 1,2, \ldots, N \rbrace$, the power allocation vector $p_n=(p^1_n,p^2_n,\ldots,p^K_n) \in \mathcal{P}_n$, such that for all $k \in \lbrace 1,2, \ldots, K \rbrace$, $p_n^k$ satisfies \eqref{eq:wf} is the best response of player $n$ to the power allocation vectors adopted by all the other players.
\end{proof}

\section{Achievable rates}\label{sec:opt}
In order to gain deeper insights into the achievable rates of the game theoretic scheme, we aim in this section to reveal the relationship between the utility obtained at the NE and the fundamental limit on the individual information transmission rate.

Let us first define the globally optimal solution, which is the set of powers $p_1=(p^1_1,p^2_1,\ldots,p^K_1) \in \mathcal{P}_1, p_2=(p^1_2,p^2_2,\ldots,p^K_2)\in \mathcal{P}_2, \ldots, p_N=(p^1_N,p^2_N,\ldots,p^K_N) \in \mathcal{P}_N$ that maximizes the sum information transmission rate given by

\begin{eqnarray}
\sum_{k=1}^{K} \log_2\left(1+\frac{\displaystyle\sum_{n=1}^{N} g_{n}^k p_{n}^k}{\sigma^2}
\right)=\sum_{n=1}^{N}\sum_{k=1}^{K}\log_2\left(1+\frac{g_{n}^k p_{n}^k}{\sigma^2+ \displaystyle\sum_{\substack{m=1 \\
m< n}}^{N} g_{m}^k p_{m}^k}\right),
\label{eq:util_opt}
\end{eqnarray}

where the equality in (\ref{eq:util_opt}) is proved in the Appendix (see Eq. (\ref{eq:util_unk})).

Then, for all $n \in \lbrace 1,2, \ldots, N \rbrace$,  the information transmission rate achieved by player $n$, denoted by $R_n \geqslant 0$,  depends on the receiver configuration. More specifically, for all $n \in \lbrace 1,2, \ldots, N \rbrace$, the following upper-bound holds in the case of NE, in which the receiver treats interference as noise:
\begin{IEEEeqnarray}{rcl}
R_n & \leq & u_n (\textbf{p}) \\
& = & \sum_{k=1}^{K}\log_2\left(1+\frac{g_{n}^k p_{n}^k}{\sigma^2+ \displaystyle\sum_{\substack{m=1 \\
m\neq n}}^{N} g_{m}^k p_{m}^k}\right).
\label{eq:util-nash}
\end{IEEEeqnarray}
Alternatively, for all $n \in \lbrace 1,2, \ldots, N \rbrace$, the following upper-bound holds in the case of the fundamental limit on the individual information transmission rate for player $n$, for the case of descending decoding order, see for instance \cite{cover}. Here, the receiver implements a successive interference cancellation (SIC) scheme with descending decoding order:
\begin{IEEEeqnarray}{rcl}
\label{EqFLbound}
R_n & \leq & \sum_{k=1}^{K} \log_2 \left(1+\frac{g_{n}^k p_{n}^k}{\sigma^2+ \displaystyle\sum_{\substack{m=1 \\m<n}}^N g_{m}^k p_{m}^k}\right).
\label{eq:util-opt}
\end{IEEEeqnarray}

Moreover, note also that, for all $n \in \lbrace 1,2, \ldots, N \rbrace$ and for all $\textbf{p} \in \mathcal{P}_1 \times \mathcal{P}_2\times \ldots \times \mathcal{P}_N$,
\begin{IEEEeqnarray}{rcl}\label{eq:util_ub}
u_n (\textbf{p}) =  \sum_{k=1}^{K}\log_2\left(1+\frac{g_{n}^k p_{n}^k}{\sigma^2+ \displaystyle\sum_{\substack{m=1 \\
m\neq n}}^{N} g_{m}^k p_{m}^k}\right) & \leq & \sum_{k=1}^{K} \log_2 \left(1+\frac{g_{n}^k p_{n}^k}{\sigma^2+ \displaystyle\sum_{\substack{m=1 \\m<n}}^N g_{m}^k p_{m}^k}\right).
\label{eq:inequn}
\end{IEEEeqnarray}
The inequality in (\ref{eq:inequn}) is proved in the Appendix (see Eq. (\ref{eq:proof_util_1})).
In particular, when the power profile $\textbf{p}$ in Eq. (\ref{eq:inequn}) satisfies the equality in Eq. (\ref{eq:wf}), the above inequality sets an upper-bound for the individual utility of player $n$, which turns out to be relevant for the performance analysis of the algorithm FEAT later in the paper.

\section{Approximate solutions for the $N$-player game}\label{sec:game-solution}

This section introduces the main contribution, namely, an algorithm whose output is the set of channel indices assigned to each transmitter. This channel assignment is performed taking into account the following observations: \newline
$(a)$ When all transmitters are let to use all the available channels, the highest information sum rate is achieved by a scheme in which transmitters are assigned an order. Transmitter $1$ uses a power allocation obtained by using the water-filling algorithm over all channels in the absence of interference. 
For all $i \in \lbrace 2,3, \ldots, n \rbrace$, Transmitter $i$ uses a power allocation obtained by using the water-filling algorithm considering the interference of all previous transmitters. At the Receiver, a SIC algorithm is implemented in which transmitters are decoded in descending order. That is, Transmitter $n$ is decoded first, whereas, Transmitter $1$ is decoded in the last position. The highest information sum rate is independent of the order given to the transmitters, under the assumption that SIC is based on perfect decoding. Alternatively, the individual information rate does depend on such order \cite{Viswanath}.\newline
$(b)$ In the absence of SIC, all transmitters are subject to mutual interference and thus, the sum information rate is severely decreased. In this context, constraining the transmitters to use only a subset of all available channels have been shown to be an interesting alternative \cite{Perlaza-Crowncom-2009}. \newline
$(c)$ When transmitters are constrained to use only a subset of all available channels, e.g., $L$ channels ($L<K$), the individual rates depend on the choice of such $L$ channels for each transmitter. More specifically, given an order $g_i^{\ell_1} > g_i^{\ell_2} > \ldots > g_i^{\ell_k}$, for some $i \in \lbrace 1,2, \ldots, n \rbrace$ such that for all $s \in \lbrace 1,2, \ldots, k \rbrace$, $\ell_s \in 
\lbrace 1,2, \ldots, k \rbrace$, a channel assignment consisting in the set $\lbrace g_i^{\ell_1}$, $g_i^{\ell_2}$, $\ldots$, $g_i^{\ell_L} \rbrace$ leads to a higher individual information rate for Transmitter $i$ than any other channel assignment. That is, individual rates are monotonically increasing with the channel gains. \newline
$(d)$ When the ratio between number of channels and number of transmitters is small, it is likely that the best channel assignment for Transmitter $i$ and Transmitter $j$ are sets whose intersection is not empty, which leads to conflict. 
 
The observations above lead to the notion of fairness when transmitters are subject to use only disjoints subsets of channels. From observations $(c)$ and $(d)$, it follows that assigning always the best channels to the same user is far from fair. 
The proposed algorithm is based on the following arbitrary assumption: When a channel must be assigned to a transmitter, the benefit of such assignment can be measured by the ratio between the channel coefficient of the assigned channel, and the highest channel coefficient available for such transmitter. That is, the higher this ratio, the more beneficial the allocation is. \\\\

Suppose the assignment of one channel is done according to the function $\nu:\{ 1,\ldots,N\}\rightarrow\{ 1,\ldots,K\}$. The global benefit of this assignment can be measured by the quantity 
\beq
\alpha(\nu):=\min_{n\leq N}\frac{g^{\nu(n)}_n}{\max_{l\leq K}g^l_n}.
\eeq

The main goal of the FEAT algorithm is to find channel assignments that maximize the value of $\alpha$ every time that a channel must be assigned.

The proposed algorithm assigns a different channel to all users at each round. This is done in such a way that $\alpha$ is maximized at the end of the round.

The FEAT algorithm has two parameters $\delta>0$ and $\beta\in(0,1)$: $\delta$ is maximum error of computing $\alpha$ so it should be small, whereas $\beta$ is the percentage of the maximum throughput below which we assume that a user is not served well enough, thus he gets additional carriers to transmit on; it should then be relatively big. For a vector $\pi$, $|\pi|$ denotes its length.

\begin{algorithm}{ \bf{FEAT : Fair coordinatEd iterAtive waTer-filling algorithm}}\\
\label{alg:alpha_choice}
We start with $M=K$, $\pi^*=[1,\ldots,N]$, $\pi_0=[1,\ldots,N]$ and $\rho_n^k=\frac{g_n^k}{\max_{l\leq K}g_n^l}$, $n=1,\ldots,N$, $k=1,\ldots,K$ and create empty lists $\mathcal{L}_n$, $n=1,\ldots,N$.\\
While $M>0$ do\\
{\bf Phase A:} For $n=1,\ldots,|\pi^*|$, sort $\rho_{\pi^*(n)}^k$, $k=1,\ldots,K$ from the biggest to the smallest, obtaining $\rho_{\pi^*(n)}(1),\ldots,\rho_{\pi^*(n)}(K)$.
Set $\underline{\alpha}=0$, $\overline{\alpha}=1$ and $\alpha^*=0$.\\
Repeat the following steps until the procedure is interrupted in point A.1):
\begin{enumerate}[{A}.1)]
\item If $\overline{\alpha}-\underline{\alpha}<\delta$ stop. Otherwise take $\alpha^*=\frac{\overline{\alpha}+\underline{\alpha}}{2}$ and $\pi=\boldsymbol{0}_{1\times K}$.
\item For $n=1,\ldots,|\pi^*|$ do the following steps:\\
Find the smallest $\rho_{\pi^*(n)}(l^*)$ such that $\rho_{\pi^*(n)}(l^*)\geq\alpha^*$. If $\pi(l^*)=0$ then put $\pi(l^*)=\pi^*(n)$. Otherwise, find the biggest $l<l^*$ such that $\pi(l)=0$, and put $\pi(l)=\pi^*(n)$. If $\pi(l)\neq 0$ for every positive $l<l^*$ put $\overline{\alpha}=\alpha^*$ and return to point A.1).
\item Put $\pi^*=\pi$ with all the zero elements removed, $\underline{\alpha}=\alpha^*$ and return to point A.1).
\end{enumerate}
{\bf Phase B:} $m_0=0$. For $n=1,\ldots,|\pi^*|$, do the following steps:
\begin{enumerate}[{B}.1)]
\item $k$ with the highest $\rho^k_{\pi^*(n)}$ is chosen. Let it be denoted by $\mathcal{K}(n)$.
\item If $\sum_{l\in\mathcal{L}_{\pi^*(n)}}\frac{1}{g^l_{\pi^*(n)}}>\frac{|\mathcal{L}_{\pi^*(n)}|}{g^{\mathcal{K}(n)}_{\pi^*(n)}}-\frac{\overline{P}_{\pi^*(n)}}{\sigma^2}$, then $\mathcal{K}(n)$ is added to the list $\mathcal{L}_{\pi^*(n)}$, $m_0=m_0+1$ and
for $i=1,\ldots,N$, $\rho^{\mathcal{K}(n)}_i=0$. Else $\pi^*(n)$ is removed from the list $\pi_0$.
\end{enumerate}
{\bf Phase C:} Set $M=M-m_0$ and $\pi^*=[]$ (empty vector).\\
If $|\pi_0|>0$ and $M>0$, do the following steps:
\begin{enumerate}[{C}.1)]
\item For $n=1,\ldots,|\pi_0|$ compute
$Q(n)=\sum_{k\in\mathcal{L}_{\pi_0(n)}}\log_2\left[ \frac{g^k_{\pi_0(n)}}{|\mathcal{L}_{\pi_0(n)}|}\left( \frac{\overline{P}_{\pi_0(n)}}{\sigma^2}+\sum_{l\in\mathcal{L}_{\pi_0(n)}}\frac{1}{g^l_{\pi_0(n)}}\right)\right]$.\\
Put $Q_{\max}=\max_{n=1,\ldots,|\pi_0|}Q(n)$.
\item Sort pairs $(\pi_0(n),Q(n))$, $n=1,\ldots,|\pi_0|$ by their second coordinates in an increasing order, obtaining $[\pi^*,Q^*]$.\\
If $Q^*(1)\leq\beta Q_{\max}$ and $m_0>0$, take $k^*=\max\{ k\leq\min\{ M,|\pi_0|\}: Q^*(k)\leq\beta Q_{\max}\}$, $\pi^*=\pi^*(1,\ldots,k^*)$,\\
else $\pi^*=\pi^*(1,\ldots,\min\{ M,|\pi_0|\})$;
\end{enumerate}
else break.
\end{algorithm}

The output of the algorithm is the partition of the spectrum $\{ 1,\ldots,K\}$ into lists $\mathcal{L}_1,\ldots,\mathcal{L}_N$. Then the strategy for each player $n$ is obtained by applying water-filling from Eq. (\ref{eq:wf}) only to carriers from his list $\mathcal{L}_n$.

To understand what we do in the algorithm presented above on an intuitive level, first note that in each iteration of its main loop, at most one carrier is added to the list of each player. Below, we will try to explain the way these additions are made phase by phase.

First, to understand the sense of Phase A, note that the fraction $\rho_n^k$ appearing there can be interpreted as a measure of disutility of player $n$ from choosing carrier $k$ instead of his best one. In the algorithm, we aim at dividing the spectrum into separate sets of carriers with only one user transmitting on each. So the utilities from using a given carrier are always of the form $\log_2\left( 1+\frac{pg_n^k}{\sigma^2}\right)$ (for a while we assume that the transmission power $p$ is fixed, rather than chosen by water-filling). Maximizing the value of $\rho_{ik}$ is thus equivalent to choosing the carrier with highest utility.
Given this interpretation, the $\alpha^*$ appearing in Phase A of the algorithm can be interpreted as the maximal disutility for any player from not choosing his carrier first, that is the worst-case\footnote{Worst-case here means that such a big disutility will only be possible if different users' private ordering (from best to worst) of the channels is similar.} ratio of utility of any of the players who do not choose their carriers first to their utility if they were the first ones to choose. The sense of Phase A is thus finding the ordering of the players which minimizes this disutility.
It is done by putting on $i$-th coordinate of ordering $\pi^*$ a player (his index), who has at least $i$ good channels to choose from (by which we mean $i$ channels with utility better than $\alpha^*$ times his best possible utility if he was a leader). $\alpha^*$ found by FEAT is the maximal value (computed with a $\delta$ toleration) for which such an ordering is possible.

Next, in Phase B, players ordered in Phase A choose one by one the best carrier (the one that gives them the highest utility) which is not yet taken by them or one of the other users and add these carriers to the lists of carriers where they are allowed to transmit.

Finally, in Phase C, the utility of each user when water-filling is applied to their current lists of carriers, $Q(n)$ is computed. The players with the lowest utility (precisely -- those whose utility is smaller than $\beta$ times the biggest utility among the users) are chosen and they are given the opportunity to profit from additional carriers to their lists in the next iteration of the main loop of the algorithm.

%\subsubsection*{Diagram of ...}
In summary, we have the following information to be exchanged between the base station and users: 
\be
\item Each UE $n$ sends beacons to his serving BS in order to estimate his channel gain at the $k^{th}$ subcarrier $g_n^k$,
\item BS computes and sends the lists $\mathcal{L}_n$ for each user $n$,
\item Each UE $n$ computes his power policy using Eq. (3) by applying the water-filling only to carriers from his list $\mathcal{L}_n$. %(no need for ordering because we have a disjoint watefilling game).
\ee

\section{Properties of FEAT}\label{sec:theory}

\subsection{Disjoint carrier water-filling game}
In this section, we consider a slightly modified water-filling game, assuming that, as before the goal of each user is to maximize his throughput, but no two users are allowed to transmit on the same carrier. Note that such a game is a generalized game in the sense of Debreu \cite{Debreu}. Equilibria in such a game are defined as in classical noncooperative games with a slight change that the set of actions available to a player depend on those applied by the others. We will show that an equilibrium in such a modified water-filling game (called disjoint carrier water-filling game in the sequel) is what we obtain as a result of FEAT. We begin the section with a simple characterization of the set of all equilibria of this game.

\begin{prop}
\label{prop:DCWGequilibria}
Suppose the vector $\textbf{p}=(p_1,\ldots,p_N)$ is such that $\mathcal{L}_i:=\{ k\in\{ 1,\ldots, K\}: p_i^k>0\}$, $i=1,\ldots,N$ are disjoint subsets of $\{ 1,\ldots, K\}$ and that $\mathcal{L}_0:=\{ 1,\ldots,K\}\setminus\bigcup_{i=1}^N\mathcal{L}_i$. Then, $\textbf{p}$ is a NE in the disjoint carrier water-filling game, if and only if for each user $i$, $p_i$ is his water-filling allocation with the set of available carriers reduced to $\mathcal{L}_i\cup\mathcal{L}_0$.
\end{prop}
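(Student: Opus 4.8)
The statement is an ``if and only if'', and the plan is to reduce both directions to a single structural observation and then invoke uniqueness of the water-filling response. The observation is that, because the supports $\mathcal{L}_1,\ldots,\mathcal{L}_N$ are pairwise disjoint, the carriers forbidden to player $i$ in the generalized game (those already occupied by some other player) are exactly $\bigcup_{j\neq i}\mathcal{L}_j$, so the carriers available to player $i$ against a fixed profile $p_{-i}$ form precisely the set $\mathcal{L}_i\cup\mathcal{L}_0=\{1,\ldots,K\}\setminus\bigcup_{j\neq i}\mathcal{L}_j$. On every carrier of this set no other player transmits, hence player $i$'s payoff against $p_{-i}$ collapses to the interference-free expression $\sum_{k\in\mathcal{L}_i\cup\mathcal{L}_0}\log_2(1+g_i^k p_i^k/\sigma^2)$ maximized over the compact convex set $\{p_i\geq 0:\ \sum_k p_i^k=\overline{P}_i,\ p_i^k=0 \text{ for } k\notin\mathcal{L}_i\cup\mathcal{L}_0\}$. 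I would then note that this objective is strictly concave in $p_i$ (each summand is strictly concave in its own coordinate, and they differ in at least one coordinate along any chord), so it admits a unique maximizer, which is by definition — using the best-response characterization behind Lemma~1 and Eq.~(\ref{eq:wf}) with the interference term deleted — the water-filling allocation of player $i$ over $\mathcal{L}_i\cup\mathcal{L}_0$. In short, conditioned on the support pattern the disjoint-carrier game decouples into $N$ independent single-user water-filling problems.

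Granting this, the ``only if'' direction is short: if $\textbf{p}$ is an equilibrium then, for each $i$, $p_i$ is a best response to $p_{-i}$ in the Debreu sense, i.e.\ it maximizes the interference-free payoff over the feasible set above, and by uniqueness this forces $p_i$ to equal the water-filling allocation over $\mathcal{L}_i\cup\mathcal{L}_0$. I would also add the small consistency remark that the support of that allocation is automatically $\mathcal{L}_i$, since $\mathcal{L}_0$ is by construction the set of carriers receiving zero power under $\textbf{p}$, so the characterization is self-consistent. Conversely, for the ``if'' direction, suppose each $p_i$ equals the water-filling allocation over $\mathcal{L}_i\cup\mathcal{L}_0$; for any feasible unilateral deviation $p_i'$ of player $i$ against $p_{-i}$, feasibility forces $\mathrm{supp}(p_i')\cap\mathcal{L}_j=\emptyset$ for all $j\neq i$, i.e.\ $\mathrm{supp}(p_i')\subseteq\mathcal{L}_i\cup\mathcal{L}_0$, so $p_i'$ lies in the very single-user problem for which $p_i$ is optimal; hence $u_i(p_i',p_{-i})\leq u_i(p_i,p_{-i})$, no profitable deviation exists, and $\textbf{p}$ is an equilibrium.

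The hard part, such as it is, is not analytic but a matter of setting up the generalized game precisely: making rigorous the claim that the action correspondence of player $i$ evaluated at $p_{-i}$ has feasible carrier set exactly $\mathcal{L}_i\cup\mathcal{L}_0$ (this is exactly where disjointness is used, and it cannot be dispensed with), and being careful that a best response \emph{must} be water-filling — which requires strict concavity and hence uniqueness of the maximizer, not merely the easier fact that water-filling \emph{is} a best response. Once these two points are pinned down, the remainder is the decoupling bookkeeping described above.
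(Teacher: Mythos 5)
Your proposal is correct and follows essentially the same route as the paper's (much terser) proof: identify the carriers feasible for player $i$ against $p_{-i}$ as exactly $\mathcal{L}_i\cup\mathcal{L}_0$ via disjointness, observe that the best response there is the single-user water-filling allocation, and read off both directions. The extra care you take with strict concavity (so that a best response \emph{must} be water-filling, not merely that water-filling \emph{is} one) is a welcome tightening of the paper's appeal to the water-filling characterization, but it is not a different argument.
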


\begin{proof} We know that the power allocation maximizing throughput of player $i$ is water-filling allocation \cite{cover}, thus in any NE of the game each player has to waterfill on all the available carriers. Since in the situation when other users use allocations $p_j$, $j\neq i$, he is limited to the set $\mathcal{L}_i\cup\mathcal{L}_0$ by the rules of the game, $\textbf{p}$ can only be a NE iff $i$ is water-filling on this set and if the same is true for any other user.
\end{proof}

An immediate consequence of Proposition \ref{prop:DCWGequilibria} is as follows:
\begin{corol}
\label{cor:DCWGalgoNE}
The output of FEAT corresponds to a NE of the disjoint carrier water-filling game.
\end{corol}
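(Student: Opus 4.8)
The plan is to derive the corollary directly from Proposition~\ref{prop:DCWGequilibria}. The FEAT output assigns to each player $n$ the water-filling allocation over the carriers in $\mathcal{L}_n$, so by the proposition it suffices to verify two facts: (i) the lists $\mathcal{L}_1,\dots,\mathcal{L}_N$ returned by FEAT are pairwise disjoint, and (ii) for every $n$, this water-filling allocation over $\mathcal{L}_n$ is also the water-filling allocation of $n$ over the enlarged set $\mathcal{L}_n\cup\mathcal{L}_0$, where $\mathcal{L}_0$ collects the carriers FEAT never assigns to anyone. Fact (i) is immediate from the algorithm: as soon as a carrier $\mathcal{K}(n)$ is appended to a list in Phase~B, the line $\rho^{\mathcal{K}(n)}_i=0$ for all $i$ removes it from the pool forever, so it cannot reappear in any later Phase~B step. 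The real content is (ii).

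For (ii), recall the two elementary facts about water-filling (with no interference on the carriers actually used, which holds here since the lists are disjoint): the allocation over a carrier set puts zero power on a carrier $j$ exactly when the resulting water level does not exceed $\sigma^2/g^j$, and enlarging the carrier set by carriers whose values $\sigma^2/g^j$ are all at least the current water level changes neither the allocation nor the level. Hence (ii) reduces to showing that, for each player $i$, the water level $w_i$ of $i$'s water-filling over $\mathcal{L}_i$ satisfies $w_i\le\sigma^2/g_i^j$ for every $j\in\mathcal{L}_0$; together with $\mathcal{L}_i\cap\mathcal{L}_0=\varnothing$ this gives that water-filling over $\mathcal{L}_i\cup\mathcal{L}_0$ coincides with the FEAT allocation over $\mathcal{L}_i$.

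The key is that the stopping test of Phase~B.2 is precisely this bound in disguise. First I would prove, by induction over the iterations of the main loop, the invariant that in every list $\mathcal{L}_n$ every carrier receives strictly positive power under water-filling over $\mathcal{L}_n$. This rests on the fact that FEAT always appends to $\mathcal{L}_n$ the best still-unassigned carrier $\mathcal{K}(n)$, that for a fixed player the set of still-unassigned carriers only shrinks from round to round (so the carriers successively appended to $\mathcal{L}_n$ have non-increasing gains, and a freshly appended carrier is the one with smallest gain, hence the one most in danger of being inactive), and the Phase~B.2 inequality itself, which --- after dividing by $|\mathcal{L}_{\pi^*(n)}|$ --- is exactly the statement $w_n>\sigma^2/g_n^{\mathcal{K}(n)}$, i.e.\ that $\mathcal{K}(n)$ is active. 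Under this invariant $w_n=\frac{\sigma^2}{|\mathcal{L}_n|}\big(\frac{\overline{P}_n}{\sigma^2}+\sum_{l\in\mathcal{L}_n}\frac{1}{g_n^l}\big)$, so when FEAT removes a player $i$ from $\pi_0$ the negated test holds for $i$'s (now frozen) list: $w_i\le\sigma^2/g_i^{\mathcal{K}(i)}$, with $\mathcal{K}(i)$ the highest-gain carrier still unassigned at that moment. Since no carrier of $\mathcal{L}_0$ is ever assigned, each $j\in\mathcal{L}_0$ is among those still-unassigned carriers, so $g_i^j\le g_i^{\mathcal{K}(i)}$ and therefore $w_i\le\sigma^2/g_i^{\mathcal{K}(i)}\le\sigma^2/g_i^j$, which is exactly the bound needed above. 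The remaining termination case is that FEAT halts with $M=0$; but then all $K$ carriers have been assigned, so $\mathcal{L}_0=\varnothing$ and (ii) is trivial.

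The part I expect to be the main obstacle is the structural bookkeeping underlying that last paragraph: establishing the active-carrier invariant for every $\mathcal{L}_n$, together with the monotone shrinking of each player's pool of unassigned carriers (which is what makes appending a high-gain carrier safe and makes $\mathcal{K}(i)$ dominate all of $\mathcal{L}_0$), and verifying that FEAT always halts with $M=0$ or $\pi_0=\varnothing$ so that every player ends with a well-defined final list to which the argument applies. Once those facts are in place, identifying the Phase~B.2 test with ``$\mathcal{K}(n)$ is active under water-filling'' and invoking the two elementary water-filling facts finishes the proof.
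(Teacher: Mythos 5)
Your proof is correct and follows essentially the same route as the paper's: reduce to Proposition~\ref{prop:DCWGequilibria}, dispose of the case $M=0$ (where $\mathcal{L}_0=\varnothing$), and for nonempty $\mathcal{L}_0$ use the fact that termination then forces every player to have failed the Phase~B.2 test, so that water-filling on $\mathcal{L}_i$ coincides with water-filling on $\mathcal{L}_i\cup\mathcal{L}_0$. Your write-up is in fact more careful than the paper's one-paragraph argument, since you make explicit the identification of the B.2 test with the water-level condition, the active-carrier invariant, and the monotonicity argument showing $g_i^j\le g_i^{\mathcal{K}(i)}$ for all $j\in\mathcal{L}_0$.
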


\begin{proof} Note that whenever the output of FEAT is a partition of entire set of carriers $\{ 1\ldots, K\}$ into disjoint sets of carriers of each player $\mathcal{L}_i$, $i=1,\ldots,N$, water-filling on these sets will clearly be a NE in the disjoint water-filling game by Proposition \ref{prop:DCWGequilibria}. Suppose then that the set $\mathcal{L}_0=\{ 1,\ldots,K\}\setminus\bigcup_{i=1}^N\mathcal{L}_i$ is nonempty. In that case, FEAT can only terminate, if $|\pi_0|=0$. But this means that all the users have been removed from $\pi_0$, because there were no carriers left such that transmission on them could improve their utility from water-filling. Hence, the water-filling allocation of any player $i$, if he waterfilled on $\mathcal{L}_i$ would be the same as if he waterfilled on $\mathcal{L}_i\cup\mathcal{L}_0$, which implies that the output of the algorithm corresponds to an equilibrium in the disjoint carrier water-filling game.
\end{proof}

\subsection{General properties of the algorithm}
In the next proposition we are giving the worst-case bounds on crucial metrics by which we can evaluate the output of FEAT as a solution to the water-filling game.

\begin{prop}
\label{prop:algo_properties}
Suppose that $N\leq K$ and that the sets $\mathcal{L}_1,\ldots,\mathcal{L}_N$ are outputs of FEAT and that $\textbf{p}=(p_1,\ldots,p_N)$ is such that for each $i$, $p_i$ is a water-filling allocation of user $i$ limited to carriers from $\mathcal{L}_i$. Let further $\alpha^*_1$ denote the value of $\alpha^*$ at the end of the first iteration of the main loop of FEAT and define the following quantities:
$$\Theta_{\max}=\max_{i\in\{ 1,\ldots,N\}}\max_{k\in\{ 1,\ldots,K\}}\overline{P}_ig_i^k,$$
$$\Theta_{\min}=\min_{i\in\{ 1,\ldots,N\}}\max_{k\in\{ 1,\ldots,K\}}\overline{P}_ig_i^k,$$
$$\Omega=\frac{\Theta_{\max}}{\sigma^2\alpha^*_1\ln\left( 1+\frac{\Theta_{\max}}{\sigma^2}\right)}+\frac{1-\alpha^*_1}{(\alpha^*_1)^2\ln\left( 1+\frac{\Theta_{\min}}{\sigma^2}\right)}>1.$$
Then:
\begin{enumerate}[(a)]
\item $\textbf{p}$ is an $(\Omega-1)$-NE in the water-filling game.
\item The social welfare in the water-filling game when the strategy vector $\textbf{p}$ is used, is at most $\Omega$ times smaller than its optimal value.
\item If all the users apply strategies $\textbf{p}$, the ratio $\frac{\max_{i\in\{ 1,\ldots,N\}}u_i}{\min_{i\in\{ 1,\ldots,N\}}u_i}$ is bounded above by $\frac{\Theta_{\max}}{\sigma^2\log_2\left( 1+\frac{\alpha^*_1\Theta_{\min}}{\sigma^2}\right)}$.
\end{enumerate}

\end{prop}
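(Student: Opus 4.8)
The plan is to reduce all three bounds to one structural fact about the first pass of FEAT, after which (a)--(c) become short computations with concavity of the logarithm and the power budgets. The structural fact is: since $N\le K$, the first iteration of the main loop assigns to every user $i$ a carrier $k_i\in\mathcal L_i$ with $g_i^{k_i}\ge\alpha_1^*\max_{l\le K}g_i^l$ (up to the $\delta$-tolerance of the bisection). I expect this to be the main obstacle, since it requires tracking the invariant maintained by Phase A and checking that it survives the greedy re-assignment of Phase B. The argument I have in mind is: for $\alpha^*$ small enough all $N$ users can be placed into the $K\ge N$ rank-slots, so the bisection in Phase A terminates with $|\pi^*|=N$ and with the property that the $j$-th user of the final ordering $\pi^*$ sits in a slot $s\ge j$ at which its sorted ratio is at least (essentially) $\alpha_1^*$; since the ratios $\rho_{\pi^*(j)}(\cdot)$ are sorted decreasingly and $j\le s$, this user's $j$-th largest ratio is also at least $\alpha_1^*$. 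In Phase B the $j$-th user of $\pi^*$ then picks its best carrier among at least $K-(j-1)\ge 1$ still-available ones, hence one whose ratio is at least its $j$-th largest, i.e. $\ge\alpha_1^*$; in iteration $1$ the acceptance test in B.2 holds vacuously (empty list), so this carrier is indeed added to $\mathcal L_i$. In particular $\alpha_1^*>0$, so every quantity below is strictly positive. Finally, later iterations only enlarge the lists, and water-filling over a larger set of carriers cannot decrease the attained rate, so any lower bound derived from iteration $1$ persists for the output $\textbf p$.

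From this I extract two lower bounds and matching upper bounds. Because the sets $\mathcal L_i$ are disjoint, $u_i(\textbf p)$ is the interference-free water-filling rate of user $i$ over $\mathcal L_i$, which is at least the rate obtained by putting the whole budget $\overline P_i$ on $k_i$; hence $u_i(\textbf p)\ge\log_2\!\big(1+\overline P_i g_i^{k_i}/\sigma^2\big)\ge\log_2\!\big(1+\alpha_1^*\Theta_{\min}/\sigma^2\big)$, and, using concavity of $t\mapsto\log_2(1+t)$ on $[0,\overline P_i\max_l g_i^l/\sigma^2]$, also $u_i(\textbf p)\ge\alpha_1^*\log_2\!\big(1+\overline P_i\max_l g_i^l/\sigma^2\big)$. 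For the upper bounds I use $\ln(1+x)\le x$ together with $\sum_k g_i^k q^k\le\overline P_i\max_l g_i^l\le\Theta_{\max}$: this yields $u_i(\textbf p)\le\Theta_{\max}/(\sigma^2\ln 2)$ and, for any unilateral deviation $q_i$ of user $i$ (dropping the nonnegative interference only helps the deviator), $u_i(q_i,\textbf p_{-i})\le\overline P_i\max_l g_i^l/(\sigma^2\ln 2)$. Likewise, combining the telescoping identity \eqref{eq:util_opt} with the interference-dropping inequality \eqref{eq:inequn}, the optimal social welfare is at most $\sum_n\overline P_n\max_l g_n^l/(\sigma^2\ln 2)$.

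It then remains to assemble the three statements. For (c), dividing the upper bound on $\max_i u_i(\textbf p)$ by the first lower bound on $\min_i u_i(\textbf p)$ gives exactly the claimed ratio $\dfrac{\Theta_{\max}}{\sigma^2\log_2(1+\alpha_1^*\Theta_{\min}/\sigma^2)}$ (up to the harmless change of logarithm base, whose constant I will not try to optimize). For (a), the best response of user $i$ is bounded by $\overline P_i\max_l g_i^l/(\sigma^2\ln 2)$, and dividing by $u_i(\textbf p)\ge\alpha_1^*\log_2(1+t_i)$ with $t_i:=\overline P_i\max_l g_i^l/\sigma^2\le\Theta_{\max}/\sigma^2$ gives a ratio of at most $t_i/(\alpha_1^*\ln(1+t_i))$; since $t\mapsto t/\ln(1+t)$ is increasing, this is at most the first summand of $\Omega$, hence at most $\Omega$, so no player can gain more than a factor $\Omega$ by deviating and $\textbf p$ is an $(\Omega-1)$-NE. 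For (b), $\mathrm{OPT}\le\frac1{\ln 2}\sum_n t_n$ while the social welfare of $\textbf p$ is at least $\frac{\alpha_1^*}{\ln 2}\sum_n\ln(1+t_n)$; the mediant inequality $\frac{\sum_n a_n}{\sum_n b_n}\le\max_n\frac{a_n}{b_n}$ with $a_n=t_n$, $b_n=\ln(1+t_n)$ bounds their ratio by $\frac1{\alpha_1^*}\max_n\frac{t_n}{\ln(1+t_n)}$, again at most the first summand of $\Omega$ and therefore at most $\Omega$. (The second, positive summand of $\Omega$ is not needed for these inequalities; it only serves to guarantee $\Omega>1$, so that "$(\Omega-1)$-NE" and "$\Omega$ times smaller" are meaningful.)
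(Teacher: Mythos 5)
Your proposal is correct in substance but takes a genuinely different route from the paper for parts (a) and (b). The paper argues by an extremal construction: it posits that the worst individual loss occurs when a player is confined to his worst acceptable carrier $k^*$ while all his other gains equal $g_i^{k^*}/\alpha_1^*$, then explicitly water-fills over all $K$ carriers and bounds the resulting ratio $u_i^{\max}/u_i^0$ by $\Omega$ (this is where the second summand of $\Omega$ enters); (b) follows by summing that per-player inequality. You instead prove a uniform lower bound $u_i(\mathbf{p})\ge\alpha_1^*\log_2(1+t_i)$ from the first-iteration carrier plus concavity, and a linear upper bound $t_i/\ln 2$ on any deviation (and, via the telescoping identity, on the optimum), so that only the \emph{first} summand of $\Omega$ is needed. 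This is cleaner, yields a strictly stronger bound, and avoids the paper's unproved assertion that its chosen configuration is actually the worst case; your careful tracking of the Phase~A slot invariant and the Phase~B greedy pick (the ``structural fact'') is also more explicit than the paper, which simply asserts in part (c) that each user receives a first carrier with $\rho_i(k_i)\ge\alpha_1^*$. For part (c) your argument coincides with the paper's, with one discrepancy worth flagging: your upper bound on $\max_i u_i$ is $\Theta_{\max}/(\sigma^2\ln 2)$, so you prove the ratio bound $\Theta_{\max}/\bigl(\sigma^2\ln(1+\alpha_1^*\Theta_{\min}/\sigma^2)\bigr)$, which is a factor $1/\ln 2$ \emph{larger} than the bound stated in the proposition. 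This is not a defect of your argument: the paper reaches its tighter constant via the step $\log_2(1+x)<x$, which is false for $x\in(0,1)$, so the constant as printed in (c) is not actually justified and your $\ln 2$-corrected version is the one that holds. You should state explicitly that you prove (c) with $\log_2$ replaced by $\ln$ in the denominator rather than claiming the printed bound verbatim.
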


\begin{proof}
(a) The biggest individual loss of utility by a player that can happen as a consequence of using FEAT is when a player, say player $i$, is forced by the algorithm to use only his worst carrier $k^*$ (the one with the lowest $g_i^k$), while he should use all of them to maximize his utility. Moreover, the interference on these other carriers is negligible. To bound this loss of utility, we will assume there is no interference at all on any carrier. Then, since the worst carrier can only be chosen by FEAT as the only carrier to transmit on if $\rho_i(k^*)=\frac{g_i^{k^*}}{\max_{k\in\{ 1,\ldots,K\}}g_i^k}\geq\alpha^*_1$, to maximize player $i$'s utility we should assume that $g_i^k=\frac{g_i^{k^*}}{\alpha_1^*}$ for $k\neq k^*$. Then, the utility of player $i$ from using FEAT is
\begin{equation}
\label{ui0}
u_i^0=\log_2\left( 1+\frac{\overline{P}_ig_i^{k^*}}{\sigma^2}\right)
\end{equation}
while the upper bound on his utility $u_i^{\max}$ can be computed as follows:
From the water-filling formula (\ref{eq:wf}) we know that:
$$p_i^{k^*}=\left( \frac{1}{\lambda_0^i}-\frac{\sigma^2}{g_i^{k^*}}\right)^+,$$
$$p_i^{k}=\left( \frac{1}{\lambda_0^i}-\frac{\alpha^*_1\sigma^2}{g_i^{k^*}}\right)^+\mbox{ for }k\neq k^*$$
with $\lambda_0^i$ satisfying
$$\frac{K}{\lambda_0^i}-\frac{\sigma^2}{g_i^{k^*}}(1+\alpha_1^*(K-1))=\overline{P}_i.$$

After some algebra, we thus obtain
$$p_i^{k^*}=\frac{\overline{P}_i}{K}-\frac{(K-1)\sigma^2}{g_i^{k^*}K}(1-\alpha_1^*),$$
$$p_i^{k}=\frac{\overline{P}_i}{K}+\frac{\sigma^2}{g_i^{k^*}K}(1-\alpha_1^*)\mbox{ for }k\neq k^*$$
if the former is positive or 
$$p_i^{k^*}=0,\quad p_i^{k}=\frac{\overline{P}_i}{K-1}\mbox{ for }k\neq k^*$$
otherwise. We shall concentrate on the first case -- all the bounds obtained will trivially also hold for the second one.

Inputting the powers into utility, we get
\begin{eqnarray*}
u_i^{\max}&=&(K-1)\log_2\left( 1+\frac{\overline{P}_ig_i^{k^*}}{\alpha^*_1\sigma^2K}+\frac{1-\alpha^*_1}{\alpha^*_1K}\right) +\log_2\left( 1+\frac{\overline{P}_ig_i^{k^*}}{\sigma^2K}-\frac{(K-1)(1-\alpha_1^*)}{K}\right)\\
&<&K\log_2\left( 1+\frac{\overline{P}_ig_i^{k^*}}{\alpha^*_1\sigma^2K}+\frac{1-\alpha^*_1}{\alpha^*_1K}\right).
\end{eqnarray*}
Note that for any $A>0$
\begin{equation}
\label{diffx1}
\left(\frac{\log_2(1+Ax)}{x}\right)_x=\frac{\frac{Ax\ln(2)}{1+Ax}-\log_2(1+Ax)}{x^2}.
\end{equation}
As $\left( \frac{Ax\ln(2)}{1+Ax}-\log_2(1+Ax)\right)_x=\ln(2)\left( \frac{A}{(1+Ax)^2}-\frac{A}{1+Ax}\right)=\frac{-A^2\ln(2)}{(1+Ax)^2}<0$, the RHS of (\ref{diffx1}) is decreasing in $x$ on $(0,+\infty)$, which means that its biggest value of 0 is obtained for $x=0$.
Hence, for any positive constant $A$, $\frac{\log_2(1+Ax)}{x}$ is a decreasing function of $x$ for $x>0$ and $\lim_{x\rightarrow 0}\frac{\log_2(1+Ax)}{x}=\frac{A}{\ln 2}$. Applying this to the obtained bound on $u_i^{\max}$ we obtain
$$u_i^{\max}<\frac{1}{\ln 2}\left(\frac{\overline{P}_ig_i^{k^*}}{\alpha^*_1\sigma^2}+\frac{1-\alpha^*_1}{\alpha^*_1}\right).$$
Using this and (\ref{ui0}) we thus obtain
%\begin{eqnarray}
\begin{equation}
\frac{u_i^{\max}}{u_i^0}<\frac{\frac{\overline{P}_ig_i^{k^*}}{\alpha^*_1\sigma^2}+\frac{1-\alpha^*_1}{\alpha^*_1}}{\ln\left( 1+\frac{\overline{P}_ig_i^{k^*}}{\sigma^2}\right)}
= \frac{\frac{\overline{P}_ig_i^{k^*}}{\sigma^2}}{\alpha^*_1\ln\left( 1+\frac{\overline{P}_ig_i^{k^*}}{\sigma^2}\right)}+\frac{1-\alpha^*_1}{\alpha^*_1\ln\left( 1+\frac{\overline{P}_ig_i^{k^*}}{\sigma^2}\right)}
\leq\Omega,
\label{eq:utility_loss}
\end{equation}
%\end{eqnarray}
which implies that $\textbf{p}$ is an $(\Omega-1)$-NE in the water-filling game.

(b) If we use the inequality (\ref{eq:utility_loss}) for each player, we obtain that the ratio of the upper bound on the social welfare and the social welfare when strategy vector $\textbf{p}$ is applied can be bounded as follows:
$$\frac{\sum_{i=1}^N u_i^{\max}}{\sum_{i=1}^Nu_i^0}\leq \frac{\sum_{i=1}^N \Omega u_i^0}{\sum_{i=1}^Nu_i^0}=\Omega,$$
which ends the proof of part (b) of the proposition.

(c) We know that after the first iteration of the main loop of FEAT, each user $i$ gets one carrier, say $k_i$, to transmit on, such that 
$$\rho_i(k_i)=\frac{g_i^{k_i}}{\max_{\{ k\in\{ 1,\ldots, K\}}g_i^k}\geq \alpha^*_1.$$
This obviously implies that his utility corresponding to the vector of strategies $\textbf{p}$ (that is -- at the end of the algorithm), is not smaller than
\begin{equation*}
\label{eq:denominator}
u_i^{\min}=\log_2\left( 1+\frac{\overline{P}_i\alpha_1^*\max_{\{ k\in\{ 1,\ldots, K\}}g_i^k}{\sigma^2}\right)\geq \log_2\left( 1+\frac{\Theta_{\min}}{\sigma^2}\right).
\end{equation*}
On the other hand, the maximal utility that any user $i$ can obtain when he gets no interference from other users is (here the vector $p^{WF}_i$ denotes the water-filling allocation of player $i$)
\begin{eqnarray*}
u_i^{\mbox{WF}}&=&\sum_{k=1}^K\log_2\left( 1+\frac{p^{WF,k}_ig_i^k}{\sigma^2}\right)<\sum_{k=1}^K\frac{p^{WF,k}_ig_i^k}{\sigma^2}\nonumber\\
&\leq& \sum_{k=1}^K\frac{p^{WF,k}_i\max_{l\in\{ 1,\ldots, K\}}g_i^l}{\sigma^2}=\frac{\overline{P}_i\max_{l\in\{ 1,\ldots, K\}}g_i^l}{\sigma^2}\leq\frac{\Theta_{\max}}{\sigma^2}.\label{eq:numerator}
\end{eqnarray*}
Dividing (\ref{eq:numerator}) side by side by (\ref{eq:denominator}), we obtain the desired inequality.
\end{proof}

\begin{remark}

The value of $\Omega$ does not directly depend on $K$, so it suggests that the quality of the solution obtained by FEAT (by which we may mean the `distance' from the social optimum or individually optimal solution described by a NE) does not depend on $K$ at all. Note however that the constants $\alpha^*_1$, $\Theta_{\max}$ and $\Theta_{\min}$ all depend on the values of $g_i^k$s which are in fact i.i.d. exponential variables. From simulations we can infer that $\Omega$ depends on $K$ sublinearly.

\end{remark}

\begin{remark}
\label{rem:top}

The bounds given by parts (a) and (b) of Proposition \ref{prop:algo_properties} are tight in the sense that we can get arbitrarily close to these bounds when $K$ is large and channels are strongly correlated across users. Moreover, the qualities of most carriers are not good, so at least one of the users ends up assigned to some of his bad carriers whatever the value of $K$ is.

As for the bound given in part (c) of the proposition, it only takes into account what happens during the first iteration of the main loop of FEAT, so the bound is rather loose. Note, that one of the main goals of the algorithm is to minimize the ratio between the best and the worst utilities of the users. The way it is done is however rather difficult to quantify.
\end{remark}

\begin{remark}
\label{rem:last}
Note that the situation described in Remark \ref{rem:top}, which is used to prove (a) and (b) of Proposition \ref{prop:algo_properties}, although theoretically possible, is rather unlikely to happen if the CQIs are i.i.d. exponential variables. In fact, what we should expect in a typical case is that the second term in the bound of $\frac{u_i^{\max}}{u_i^0}$ given by (\ref{eq:utility_loss}) will disappear when $K$ becomes large, as $\alpha^*_1 \rightarrow 1$ and the denominator of this part will grow with $K$ going to infinity.
On the other hand, the denominator of the first term when $K>N$ can be replaced by the total utility from using all the carriers assigned to user $i$ by FEAT, which will grow at a similar rate as the numerator with increase of $K$ (as additional carriers are added to $\mathcal{L}_i$).
We are thus convinced, that the expected quality of the solutions found by FEAT will in fact tend to a constant for any given $N$ as $K\rightarrow\infty$, which is much better than the Proposition \ref{prop:algo_properties} suggests. Quantifying these expected values is again however much more difficult than obtaining the worst-case estimates given there.

\end{remark}

\subsection{Complexity of the algorithm}
It is easy to see that the complexity of Phase A of FEAT is $O(NK\log(K))$, that of Phase B is (note that the number of iterations of the loop inside it is bounded by a number depending only on $\delta$) $O(NK)$, while that of Phase C is $O(N\log(N))$. Since on each iteration of the main loop of the algorithm at least one carrier is assigned to a user or (in case none were assigned) the list of users who are still available for further assignments is shrunk, the total number of iterations of the main loop of the algorithm is bounded above by $N+K$. Finally, note that by assumption, FEAT is applied only in case $N\leq K$. Putting all this information together, we can conclude the following:

\begin{prop}
\label{prop:complexity}
The complexity of FEAT is $O(NK^2\log(K))$.
\end{prop}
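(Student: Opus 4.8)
The plan is to bound the running time of FEAT by the product of the number of executions of the main \texttt{while} loop and the worst-case cost of a single execution, relying on the per-phase cost estimates stated in the paragraph preceding the proposition together with a short potential argument for the loop count.

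First I would justify the cost of one pass through the main loop. In Phase A, for each of the at most $N$ active users we sort the $K$ ratios $\rho^k_{\pi^*(n)}$, which costs $O(NK\log K)$; the bisection on $[\underline{\alpha},\overline{\alpha}]$ halves this interval at every pass and hence terminates after $\lceil\log_2(1/\delta)\rceil$ passes, and each such pass performs, per active user, a binary search in a sorted list ($O(\log K)$) and a linear scan for a free slot ($O(K)$), i.e. $O(NK)$. Treating $\delta$ as a fixed constant, Phase A is therefore $O(NK\log K)$. Phase B iterates over the $|\pi^*|\le N$ users, each step costing $O(K)$ to locate the best remaining carrier and $O(N)$ to zero out a column of $\rho$ — the inner \texttt{repeat} being bounded by a quantity depending only on $\delta$ — so Phase B is $O(NK)$. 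Phase C sorts at most $N$ pairs, $O(N\log N)$ (the evaluation of the $Q(n)$'s adding only $O(NK)$). Summing the three phases, one pass of the main loop costs $O(NK\log K)$.

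Next I would bound the number of passes. Consider the integer $M+|\pi_0|$; it equals $K+N$ initially, is always nonnegative, and is non-increasing since $M$ only decreases (by $m_0\ge 0$) and users are only ever removed from $\pi_0$, never added. On any completed pass, either $m_0\ge 1$, in which case $M$ drops in Phase C, or $m_0=0$, in which case Phase B ran over a nonempty $\pi^*$ (the output of Phase A has length at least one when $N\le K$) and every user it examined was deleted from $\pi_0$, so $|\pi_0|$ drops; if neither occurs the loop has already exited through the \texttt{else break}. Hence $M+|\pi_0|$ strictly decreases on each completed pass, and the main loop runs at most $K+N$ times.

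Combining the two, the total cost is $O\big((K+N)\,NK\log K\big)$. Since FEAT is only invoked when $N\le K$, we have $K+N\le 2K=O(K)$, and this collapses to $O(NK^2\log K)$, proving the proposition. The only point requiring care is the claim that the inner loops of Phases A and B inflate the cost by at most a constant factor: one must make explicit that the bisection depth $\lceil\log_2(1/\delta)\rceil$ is independent of $N$ and $K$ so that it can legitimately be absorbed into the $O(\cdot)$; the rest is bookkeeping.
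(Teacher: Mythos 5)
Your argument is correct and follows essentially the same route as the paper: per-iteration costs of $O(NK\log K)$, $O(NK)$ and $O(N\log N)$ for Phases A, B and C, a bound of $N+K$ on the number of main-loop iterations (which you justify slightly more explicitly via the decreasing quantity $M+|\pi_0|$), and the standing assumption $N\leq K$ to absorb the factor $N+K$ into $O(K)$. No gaps.
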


\section{Numerical results}\label{sec:simul}
In this section, simulation results are presented to verify the performances of the proposed algorithm FEAT.
For comparison purpose, we exemplify our general analysis by investigating the following algorithms:
\bi
\item \emph{\textbf{the Nash strategy}}: which is the strategy of the non-cooperative game theoretic problem whose solution is defined in Section \ref{sec:game-model}. Then, the utility is computed by using (\ref{eq:util_se}), treating the interference as noise at the receiver side.
\item \emph{\textbf{the optimal strategy}}: which corresponds to the fundamental limit on the individual information transmission rate obtained when all users waterfill considering all other users as interference as expressed in Eq. (\ref{eq:wf}). Then, the utility is computed using the right hand side of (\ref{eq:util-opt}) by applying SIC at the receiver side, \emph{i.e.}, by considering only predecessors as interference (see Section \ref{sec:opt}). 

This will thus serve as the
optimal social welfare solution for the sum rate maximization problem in order to demonstrate how much gain may theoretically be exploited through considering such a global optimal solution with respect to the other schemes. It is noteworthy that, as already mentioned in Section \ref{sec:opt}, the optimal solution requires to implement a SIC scheme at the receiver side, whereas Nash and FEAT do not. 
\item \emph{\textbf{the spectrum pooling strategy}}: which is throughput-based-utility \cite{Jondral_pooling,MajedIET08}. This is done by overlaying users into a common pool. The spectrum pooling behavior is assumed to allow only one user to
simultaneously transmit over the same sub-band.
\ei

\section*{Transmission rate}

\begin{figure}[t]
\centering
%\begin{subfigure}
\vspace*{-4.5cm}
\hspace*{-0cm}
\includegraphics[height = 13.5cm,width=11cm]{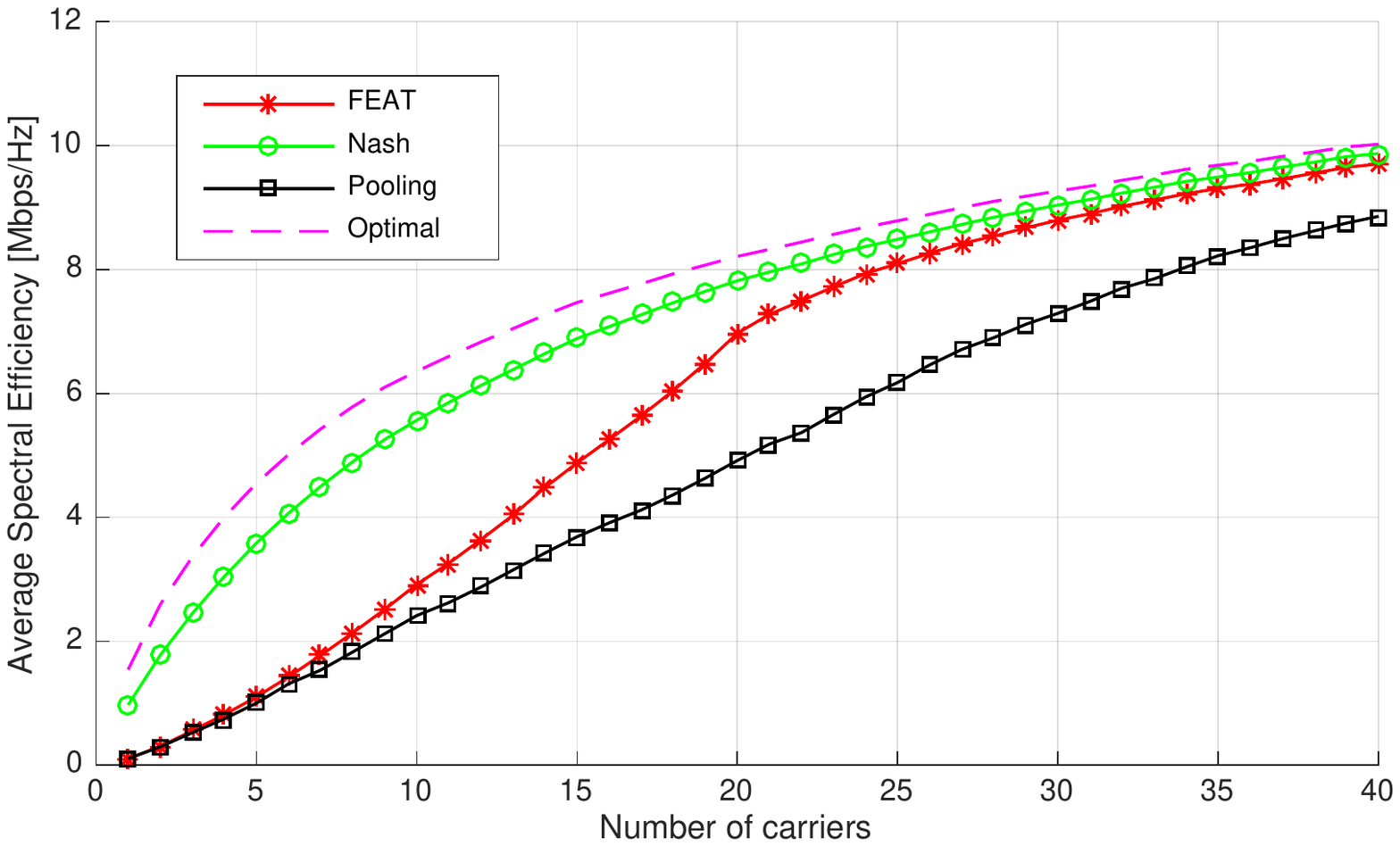}
\vspace{-4.5cm}
\caption{The average utility as function of $K$ for SNR $=-10$ dB and $N=20$.}
\label{fig:SW_K_algo_-10dB}
%\end{subfigure}
 \end{figure}

\begin{figure}[t]
\vspace*{-4.5cm}
\centering
%\begin{subfigure}
\hspace*{-0cm}
\includegraphics[height = 13.5cm,width=11cm]{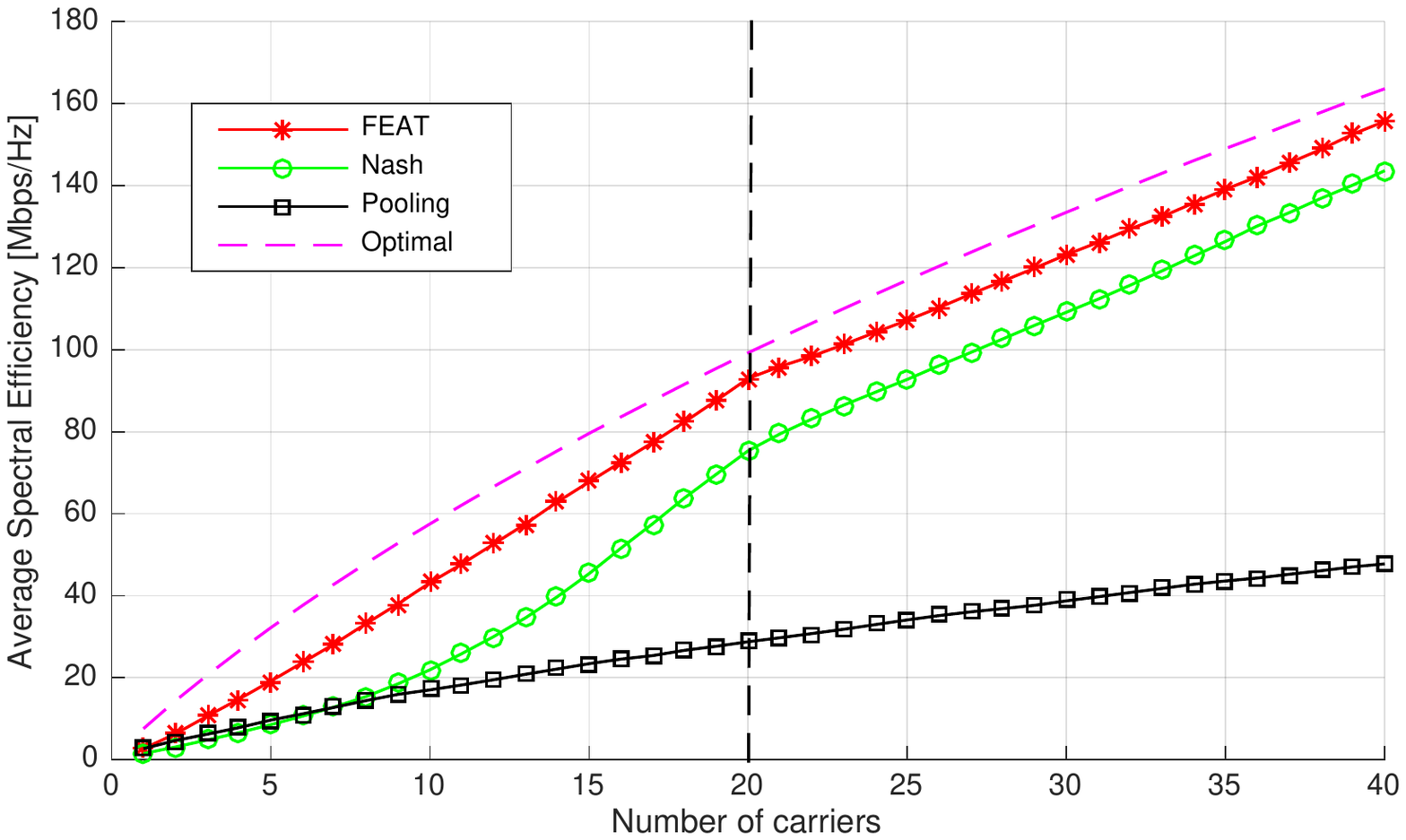}
\vspace*{-4.5cm}
\caption{The average utility as function of $K$ for SNR $=10$ dB and $N=20$.}
\label{fig:SW_K_algo_N20_10dB}
 %\end{subfigure}
 \end{figure}

\begin{figure}[t]
\centering
\vspace*{-3cm}
\hspace*{-0cm}
\includegraphics[height = 11cm,width=11cm]{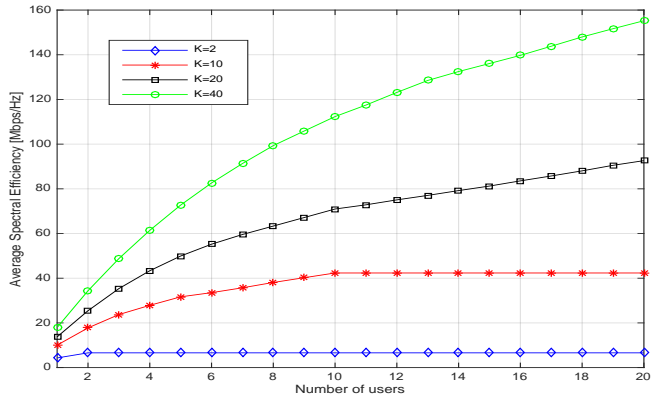}
\vspace{-3cm}
\caption{The average utility as function of $N$ for different $K$ at SNR = $10$ dB.}
\label{fig:SW_N_K}
\end{figure}

\begin{figure}[t]
\centering
\vspace*{-3cm}
\begin{subfigure}
\centering
\hspace*{-0cm}
\hspace*{-0cm}
\includegraphics[height = 11cm,width=11cm]{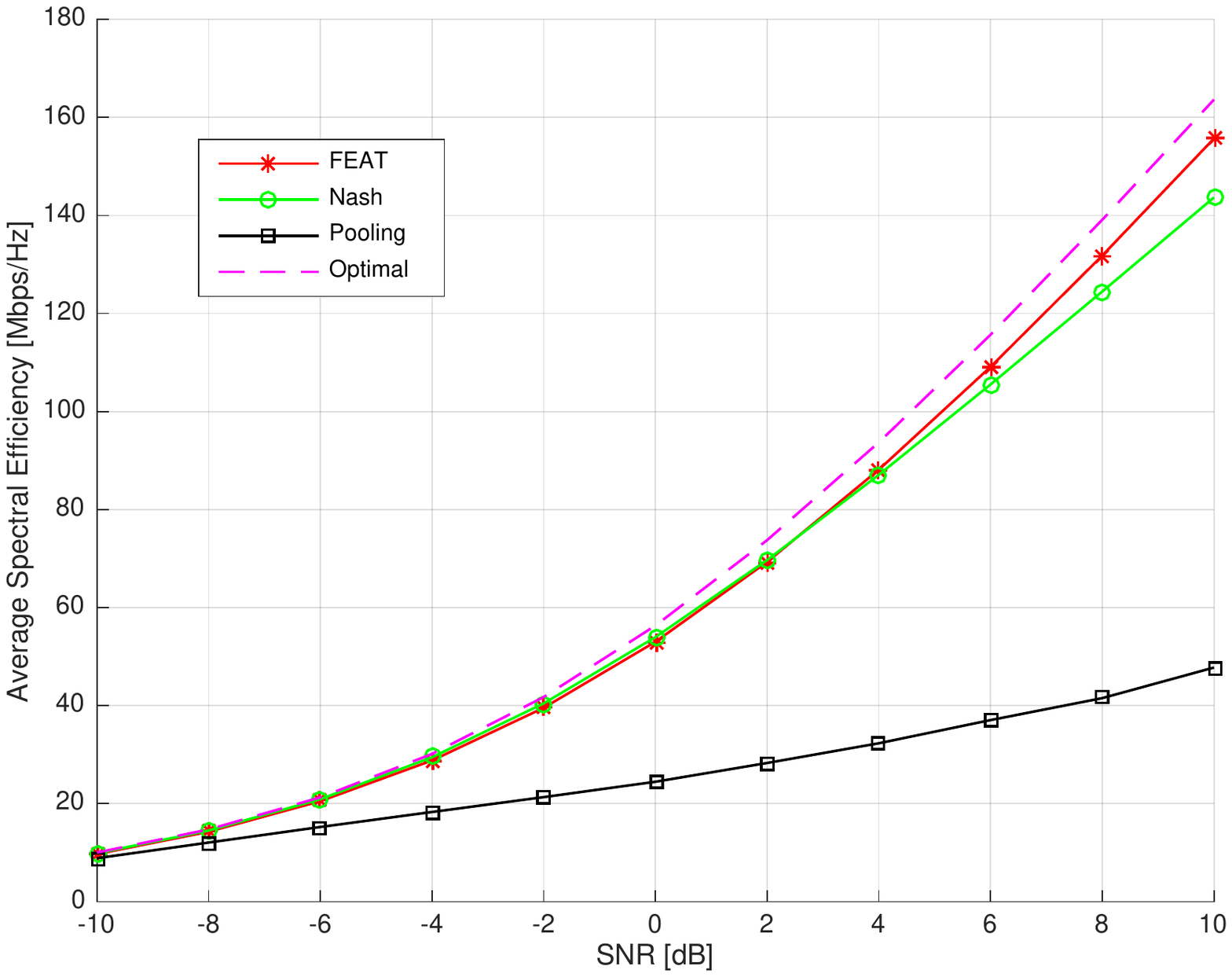}
\vspace{-3cm}
\caption{The average utility as function of SNR for $K=40$ and $N=20$.}
\label{fig:SW_SNR_algo}

\vspace*{-2cm}
%\end{figure}
\includegraphics[height = 11cm,width=11cm]{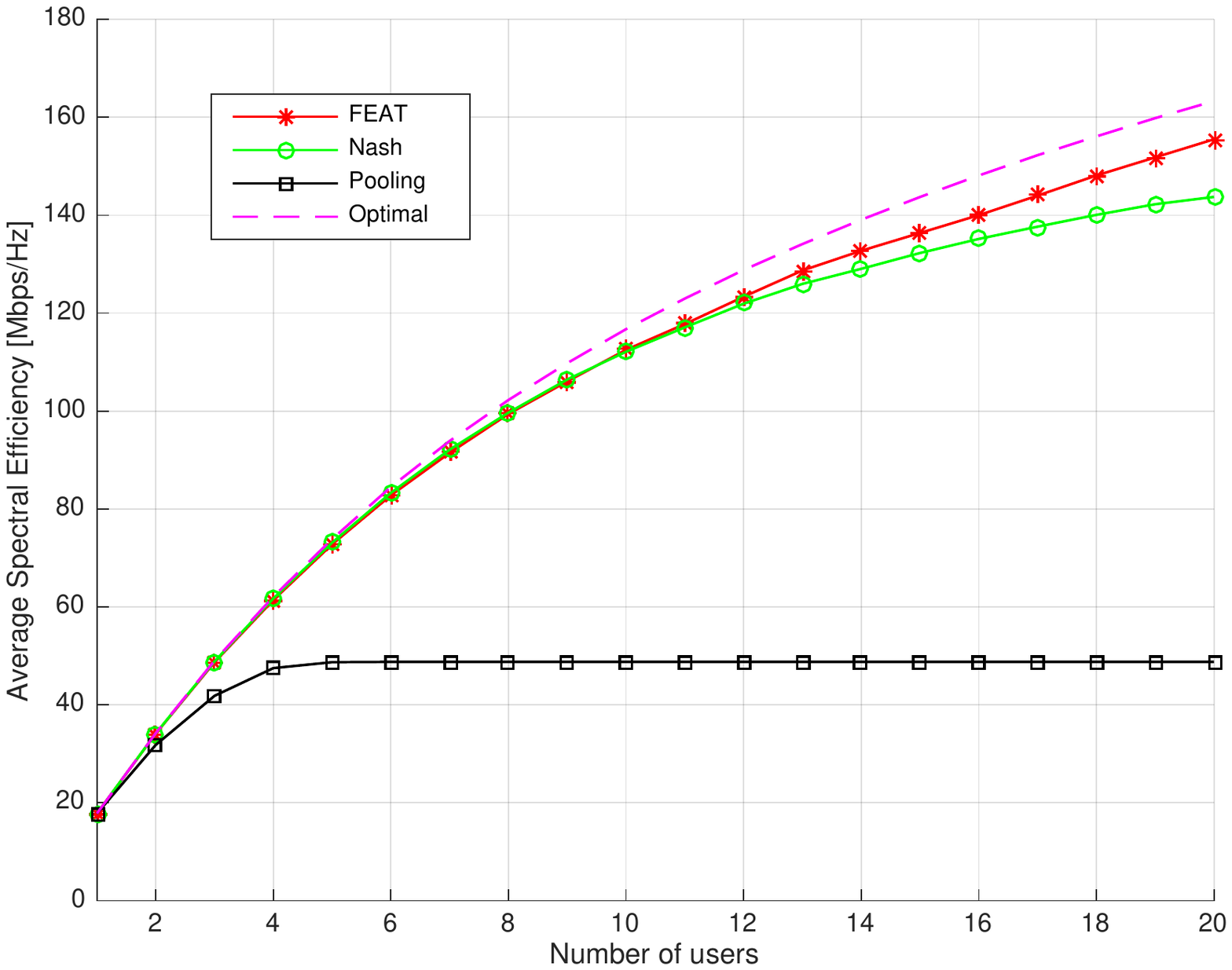}
\vspace{-3cm}
\caption{The average utility as function of $N$ for SNR $=10$ dB and $K=40$.}
\label{fig:SW_N_algo_K40_10dB}
\end{subfigure}
\end{figure}

\begin{figure}[t]
\centering
\vspace*{-4.5cm}
%\begin{subfigure}
\hspace*{-0cm}
\includegraphics[height = 14cm,width=11cm]{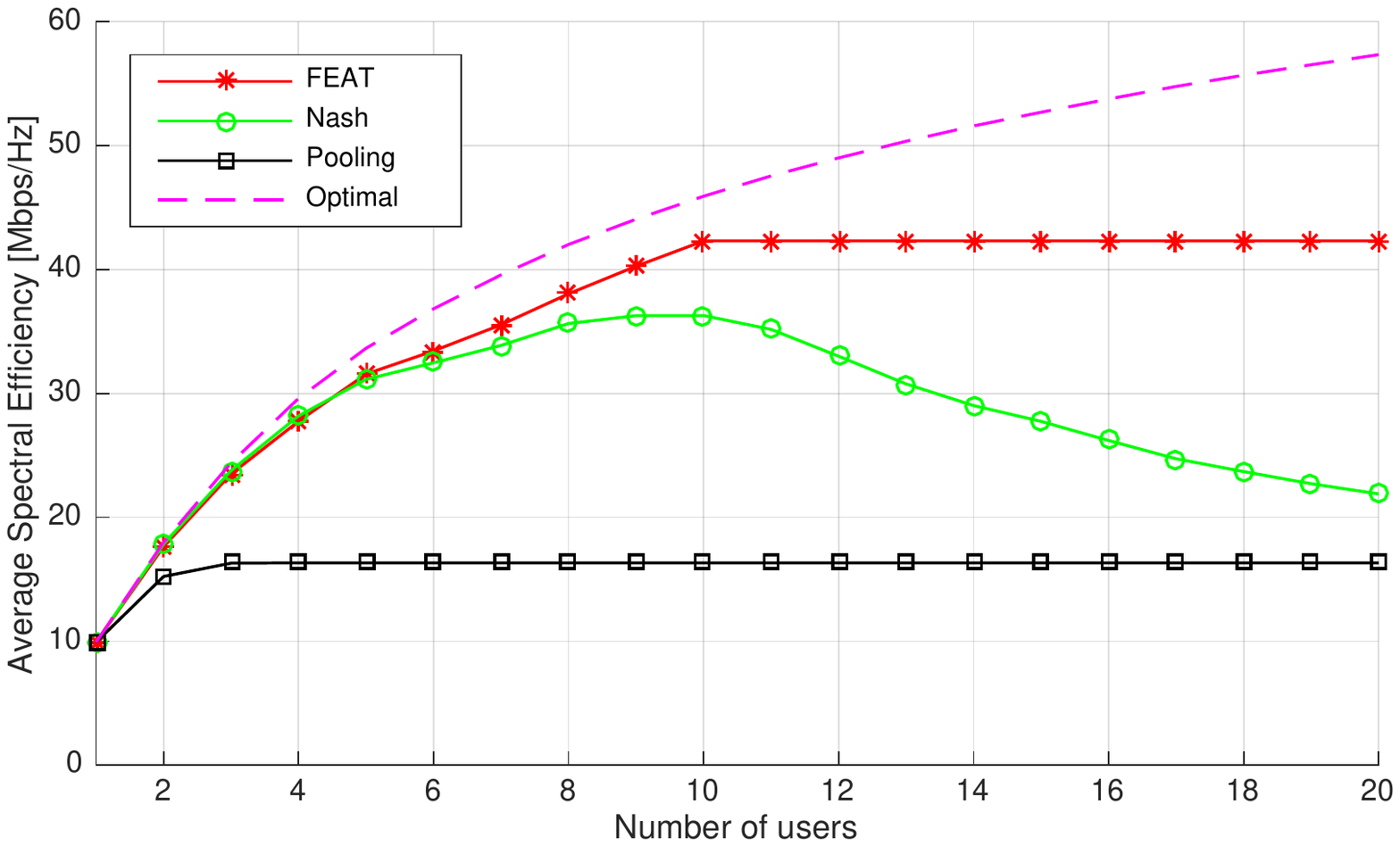}
\vspace{-4.5cm}
\caption{The average utility as function of $N$ for SNR $=10$ dB and $K=10$.}
\label{fig:SW_N_algo_K10}
%\end{subfigure}
\end{figure}

Fig. \ref{fig:SW_K_algo_-10dB} and Fig. \ref{fig:SW_K_algo_N20_10dB} depict the average spectral efficiency as function of the number of carriers $K$ for $N=20$ and SNR $=-10$ dB, respectively SNR $=10$ dB. At high SNR regime, users have less opportunities to transmit. So even though FEAT tries to assign at least one good channel to every user, some users may end up with average channels. This is responsible for the fact that the slope of social welfare tends to saturate much faster with decrease of $K$ in Fig. \ref{fig:SW_N_K}. In general, FEAT is better than Pooling and worse than Nash only if SNR is low (see Fig. \ref{fig:SW_K_algo_-10dB}). This is understandable, as in that case there are very good opportunities (in terms of spectral efficiency) to transmit. In such a situation, strictly avoiding interference (as we do in FEAT) is no more the most profitable thing to do. However, we emphasize that the range of SNR (see Fig. \ref{fig:SW_SNR_algo}) which gives upper hand to Nash is much smaller than that for FEAT. In general, notice that the average spectral efficiency at the NE is always surpassed by the optimal strategy, as already proved in Eq. (\ref{eq:util_ub}).

In Fig. \ref{fig:SW_N_algo_K40_10dB}, we see that for low $N$, FEAT is very close to the optimal solution for a low number of users. As the number of users $N$ increases, the opportunities to transmit becomes scarce giving rise to an interference-limited system. In that situation, FEAT outperforms Nash again, but the gap with the optimal strategy becomes high. This behavior confirms our theoretical claims in Section \ref{sec:opt}, where we showed that the utility in (\ref{eq:util_se}) is always bounded above by the fundamental limit on the individual information transmission rate.

Let us now look at the situation where the number of carriers $K$ decreases. In Fig. \ref{fig:SW_N_algo_K10}, we plot the average spectral efficiency as function of the number of users $N$ for SNR $=10$ dB and $K=10$. We notice that, as long as $K>N$, social welfare keeps increasing with $N$. Then, the social welfare does not grow above some threshold, but remains constant, contrary to Nash where spectral efficiency decreases as $N$ increases. Pooling gives all resources to first users in the queue, whereas in FEAT we try to share resources among all users in the system. In Pooling, when the number of users increases, all resources have been taken by first users in the queue and there is no addition in social welfare. In FEAT, even when we increase $N$, the social welfare keeps increasing because we give resources to users who will use them best and thus increase social welfare. 
As we can see from Fig. \ref{fig:SW_K_SNR_N20}, there are slight changes of slope when $K$ is a multiple of $N$. What we observe here is that, as long as $N \leq K$, social welfare increases at higher speed. This can be explained by the fact that FEAT starts by trying to allocate one good channel to each user. The channels it allocates to users afterwards are substantially worse, resulting in slower increase in social welfare. This can be clearly observed when $K=N=20$.

\begin{figure}[t]
\centering
\vspace*{-3cm}
\hspace*{-0cm}
\includegraphics[height = 12cm,width=10cm]{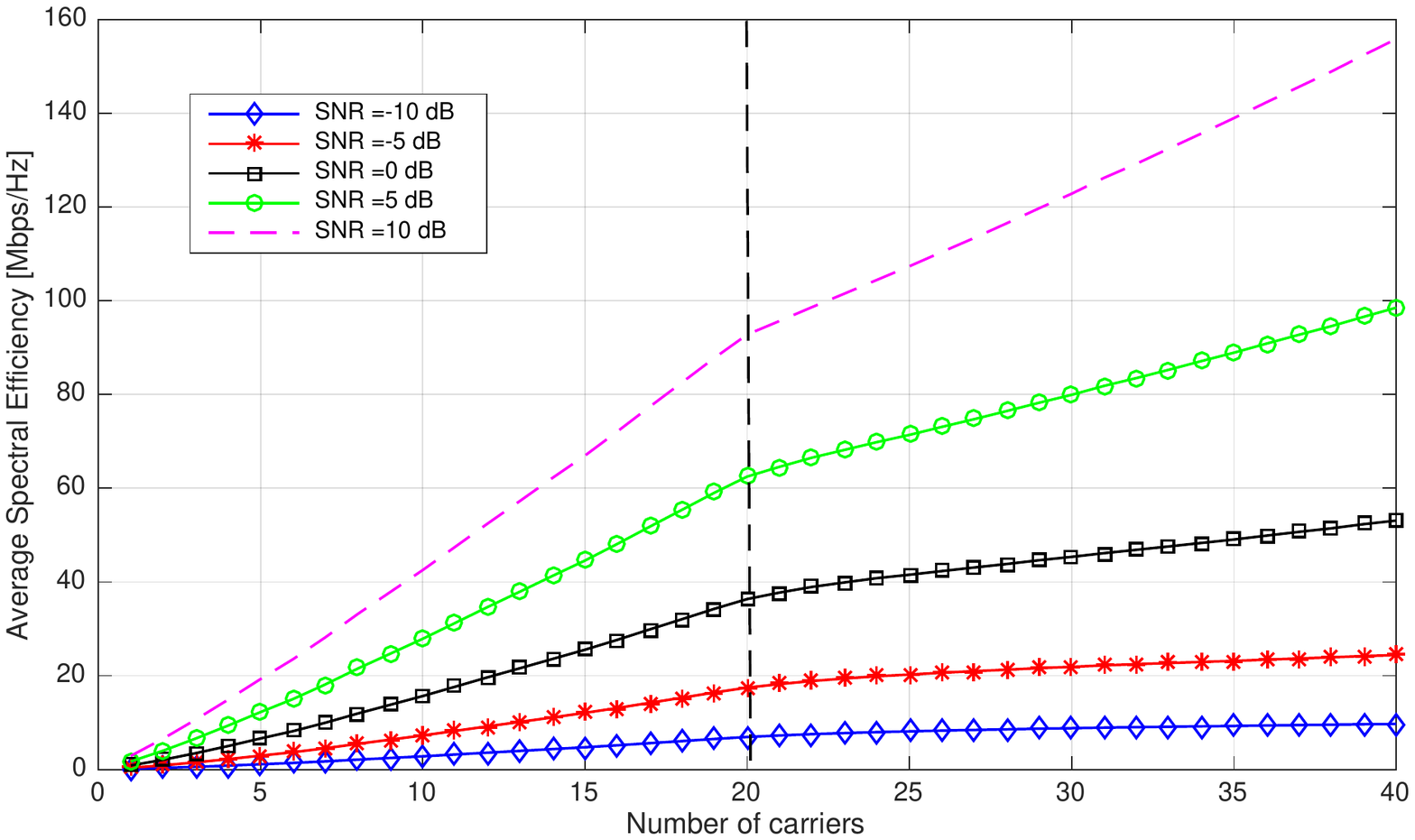}
\vspace{-3.5cm}
\caption{The average utility as function of $K$ for different SNR for $N=20$.}
\label{fig:SW_K_SNR_N20}
\end{figure}

\section*{Energy efficiency}
To study how energy efficient FEAT is, let us consider the following energy efficiency function that has bits per joule as units, and which perfectly captures the trade-off between throughput and battery life \cite{meshkati-jsac-2006} :
\beq\label{eq:ee}
EE_n(\mathbf{p_1},\ldots,\mathbf{p_{n}})=\frac{\displaystyle R_n \cdot \sum_{k=1}^K f(\gamma_{n}^k)}{\displaystyle\sum_{k=1}^K p_{n}^k},
\eeq
where $\mathbf{p_{n}}$ is the power control vector of user $n$ over all his carriers, \emph{i.e.}, $\mathbf{p_{n}}=(p_n^1,\ldots,p_n^K)$, $R_n$ stands for the transmission rate of user $n$ and $f(\cdot)$ is an S-shaped function, which measures the packet success rate. In Fig. \ref{fig:EE_N_algo_K10}, we plot the energy efficiency as function of $N$ for $K=10$ and a rate $R_n=1$ Mbps for each user $n$. As expected, it is shown that for low $N$, there will be little interference, so energy is used quite efficiently. This explains why Nash is close to FEAT in this region. Bigger $N$ gives rise to high interference, resulting in FEAT performing better in comparison to Nash. As the number of carriers $K$ increases (see Fig. \ref{fig:EE_N_algo_K40}), interference is no more an issue and Nash performs better, but remains less energy efficient than FEAT, especially for high number of users. Notice also that Pooling scheme is not energy efficient in general.
The main rationale behind these results is the way how FEAT deals with interference mitigation. As mentioned earlier, FEAT is more (spectral) efficient in situations where interference is an issue, as it turns off links which do not contribute enough information rate to outweigh the interference degradation caused by them to the rest of the system. We will see next that FEAT further ensures fairness among users by guaranteeing that every user in the system is assigned at least one good channel to every user in the system, as long as there are enough channels for every user.

\begin{figure}[t]
\centering
\vspace*{-3cm}
\hspace*{-0cm}
\includegraphics[height = 10.5cm,width=11cm]{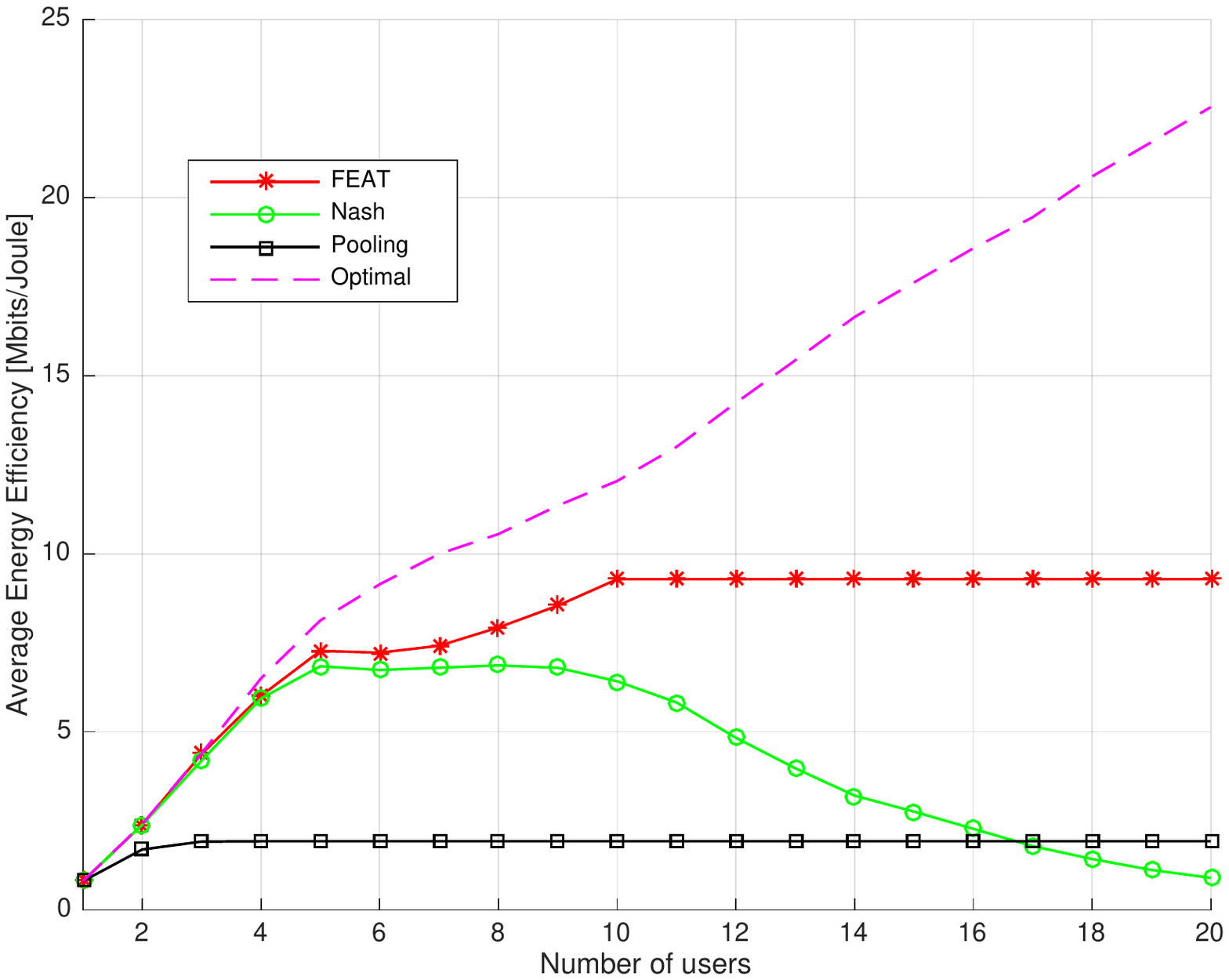}
\vspace{-3cm}
\caption{The average energy efficiency as function of $N$ for $K=10$.}
\label{fig:EE_N_algo_K10}
\end{figure}

\begin{figure}[t]
\centering
\vspace*{-3cm}
\hspace*{-0cm}
\includegraphics[height = 10.5cm,width=11cm]{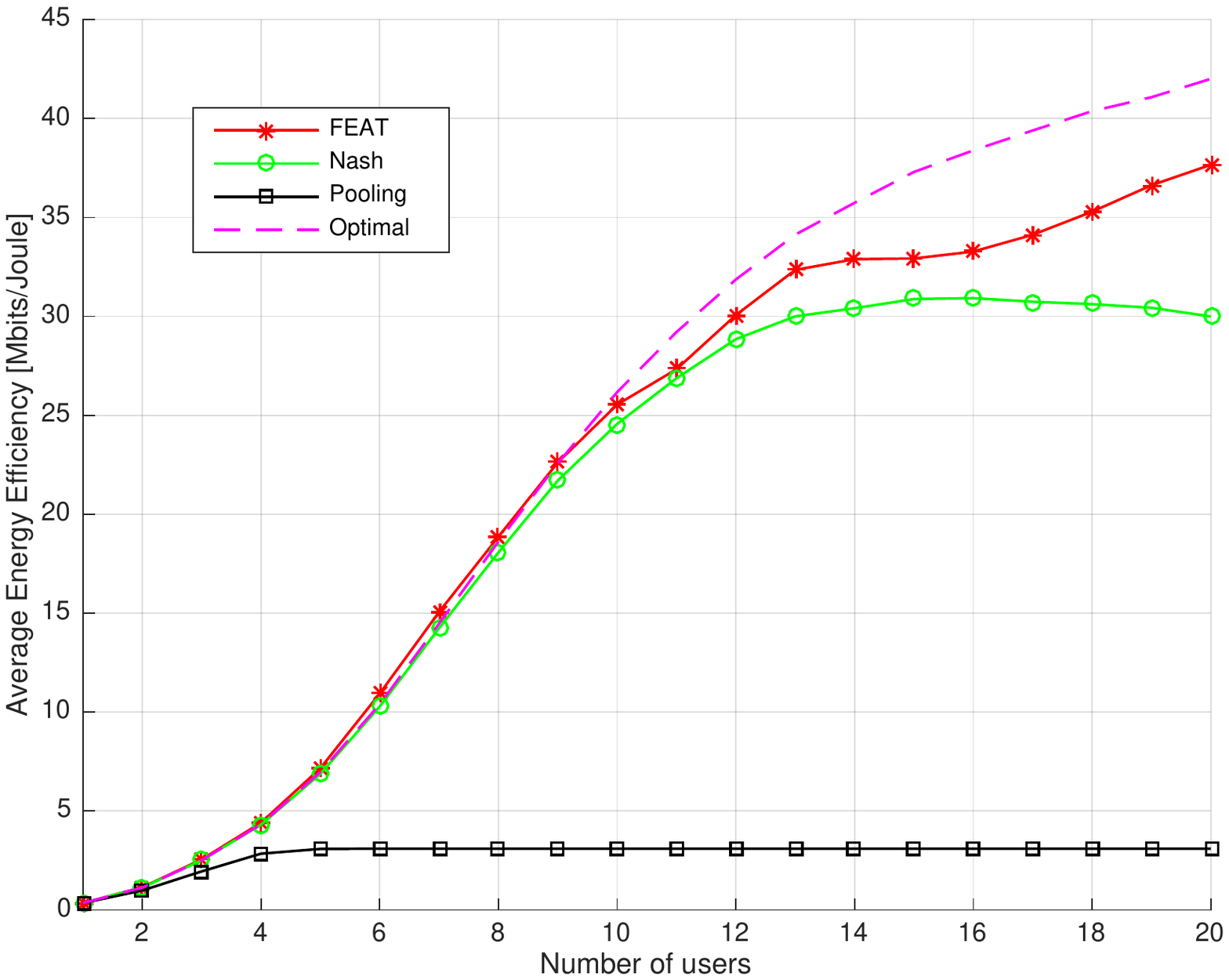}
\vspace{-3cm}
\caption{The average energy efficiency as function of $N$ for $K=40$.}
\label{fig:EE_N_algo_K40}
\end{figure}

\section*{Fairness}
So far, we have addressed the {\it system} performances of FEAT. To go further with the analysis and to show how fair FEAT is, we resort to {\it user} performance. Specifically, we plot in Fig. \ref{fig:fairness_N_algo_K40} the ratio between individual utility of the worst user and the best user in each of the schemes for $K=40$. As intuition would suggest, all curves decrease as $N$ increases, as there are less opportunities for users to pick their best channel. Clearly, FEAT is the best, whereas Nash and the optimal strategies become not fair for increasing number of users. 

For a low number of users, the ratio between individual utility of the worst user and the best user increases as the number of carriers increases (see Fig. \ref{fig:fairness_K_algo_N10_K40}). Here we remark that, when $K<N$, some users are not served by FEAT (although the algorithm works well for $K <N$), but once we reach the point where $K=N$, FEAT immediately starts to outperform all other schemes. We further see in Fig. \ref{fig:fairness_K_algo_N10_K40} that there are peaks anytime $K$ is divisible by $N$ because in that case each user will typically be assigned to the same number of carriers.

\begin{figure}[t]
\centering
\vspace*{-4.5cm}
\hspace*{-0cm}
\includegraphics[height = 14cm,width=11cm]{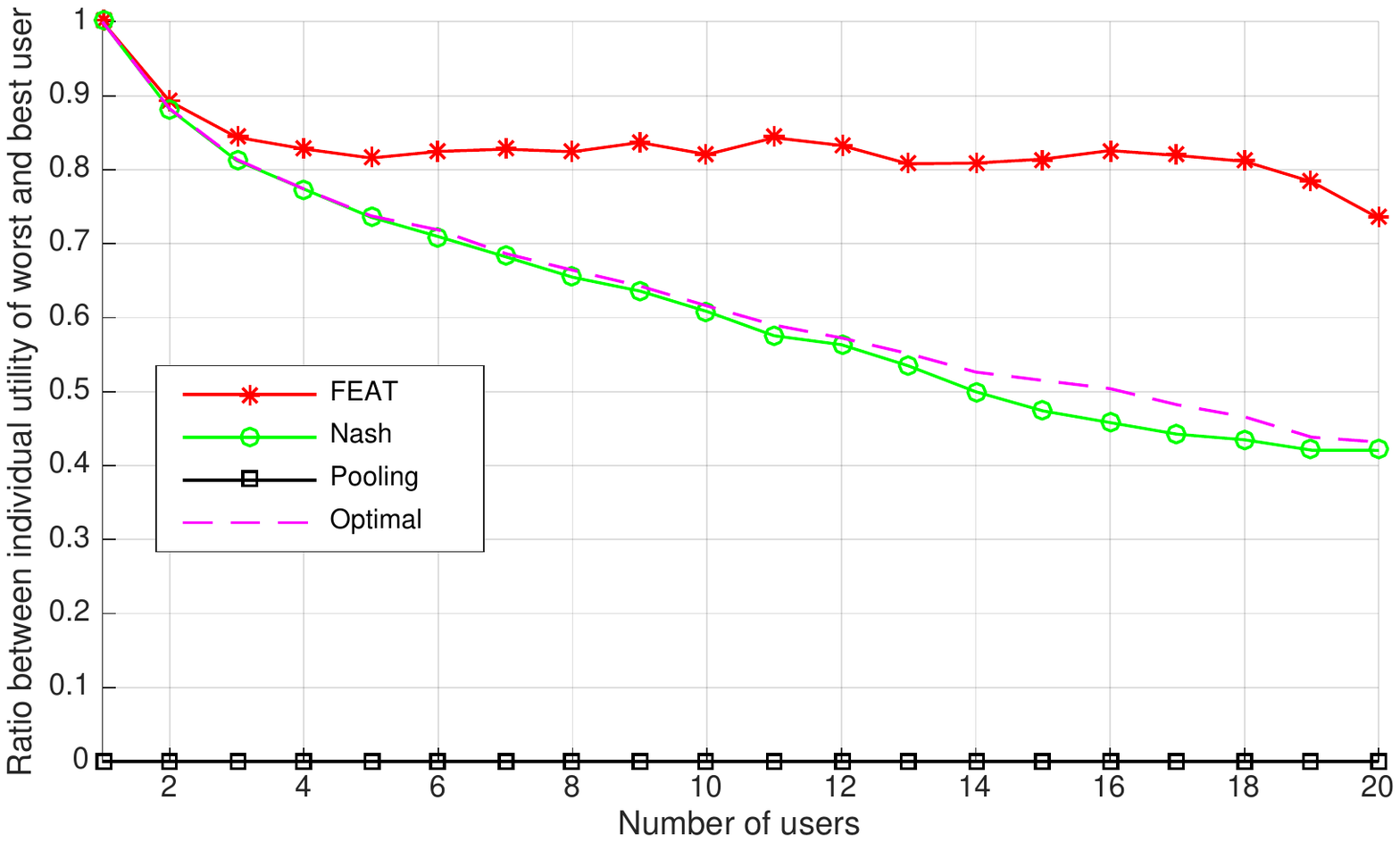}
\vspace{-4.5cm}
\caption{The ratio between individual utility of worst user and best user as function of $N$ for $K=40$.}
\label{fig:fairness_N_algo_K40}
\end{figure}

\begin{figure}
\centering
\vspace*{-3cm}
\hspace*{-0cm}
\includegraphics[height = 11cm,width=11cm]{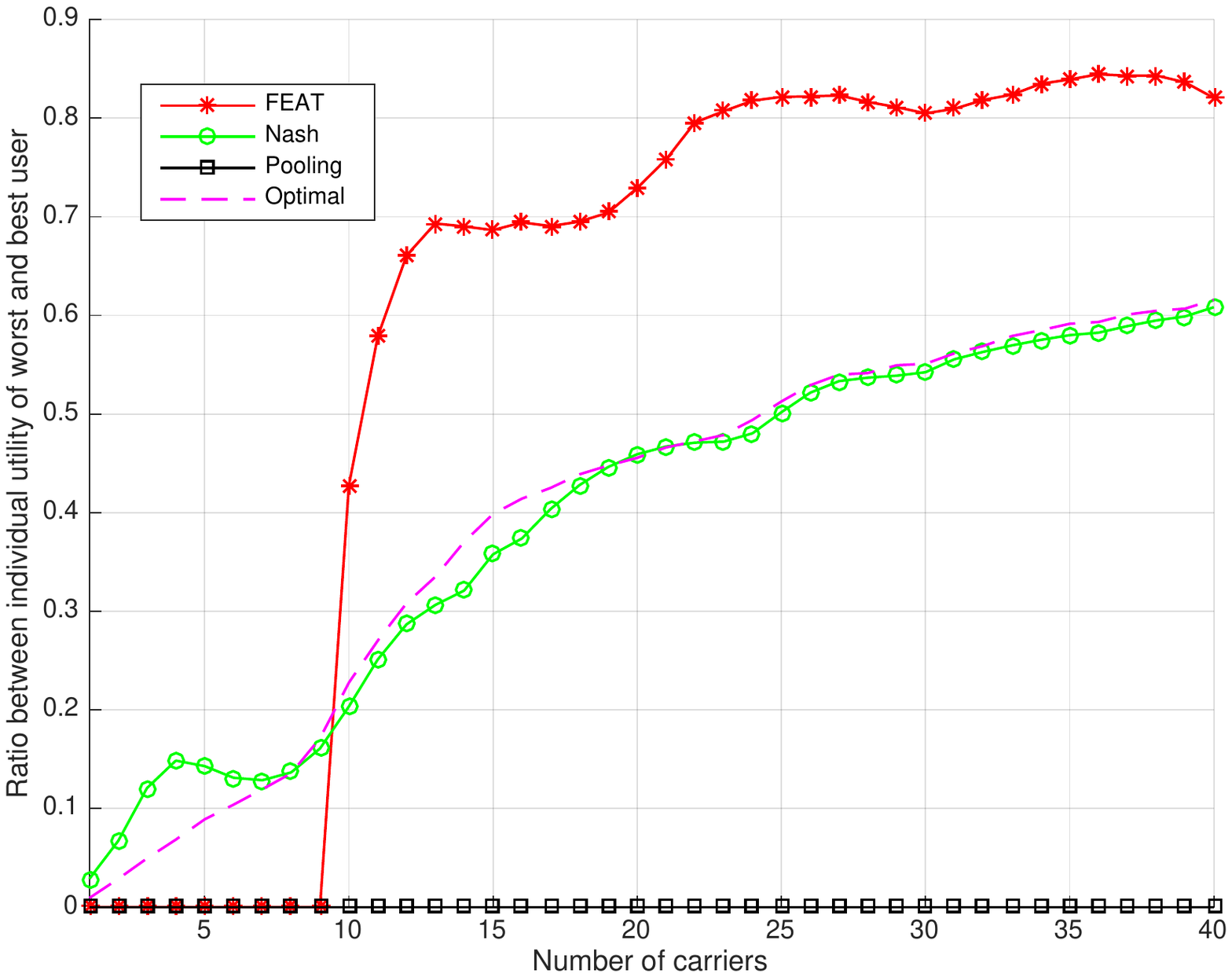}
\vspace{-3cm}
\caption{The ratio between individual utility of worst user and best user as function of $K$ for $N=10$.}
\label{fig:fairness_K_algo_N10_K40}
 %\end{subfigure}
\end{figure}

\begin{figure}[t]
\centering
\vspace*{-3cm}
\hspace*{-0cm}
\includegraphics[height = 11cm,width=11cm]{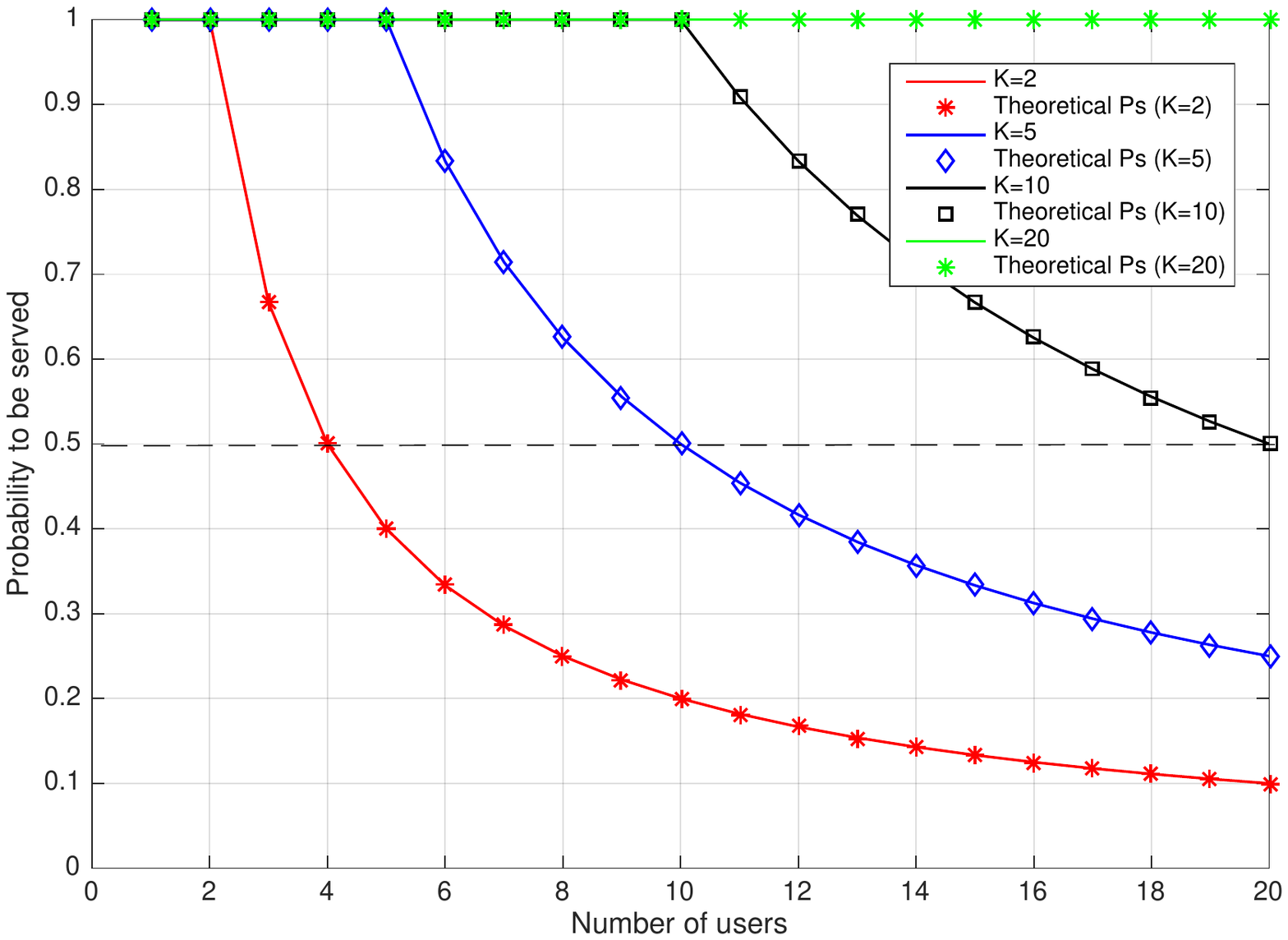}
\vspace{-3cm}
\caption{The probability to be served for FEAT as function of $N$ for different $K$.}
\label{fig:Ps_N_K_th}
 \end{figure}

Fig. \ref{fig:Ps_N_K_th} captures the probability to be served as function of $N$ for different $K$: it is equal to 0.5 (\emph{i.e.}, half of users are allocated at least one carrier to transmit on) when $\frac{K}{N}=0.5$. In fact, as long as $\frac{K}{N}\geq1$ the probability to be served equals $1$, otherwise it is equal to $\frac{K}{N}$. As expected, the probability to be served increases with $K$ and decreases with $N$. We emphasize that there is no possibility of lack of transmission whenever\footnote{Notice that this is not a restriction of the proposed scheme. In fact, FEAT always produces some output even for $K<N$.} $K>N$. Then, each user transmits on at least one carrier and the sets of carriers they transmit on are orthogonal (as we can see from Fig. \ref{fig:Pnc_N_algo_K10_N20}).

\begin{figure}[t]
\centering
\vspace*{-3cm}
\hspace*{-0cm}
\includegraphics[height = 11cm,width=11cm]{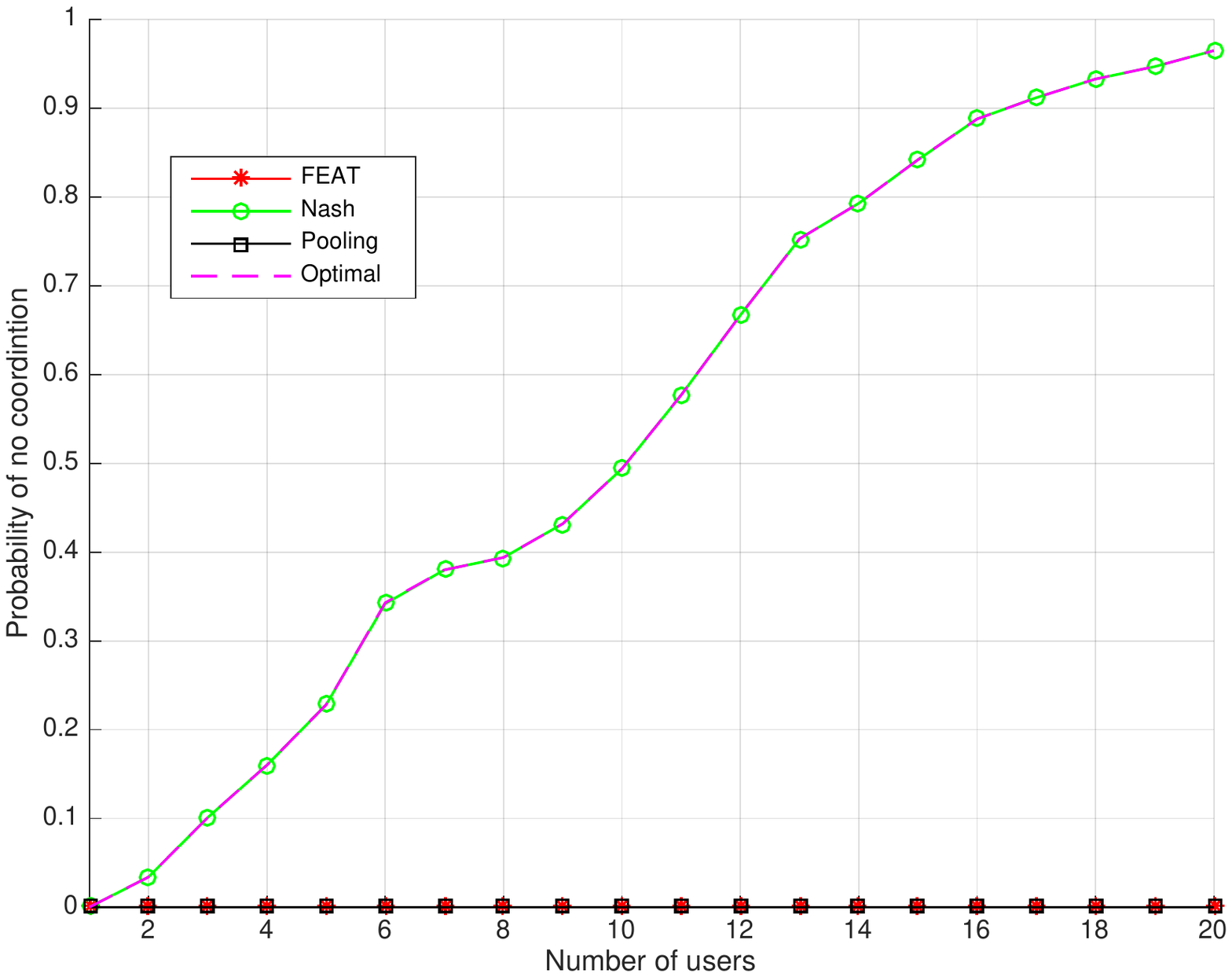}
\vspace*{-3cm}
\caption{The probability of no coordination as function of $N$ for $K=10$.}
\label{fig:Pnc_N_algo_K10_N20}
 \end{figure}

\section*{Robustness}

In order to get good intuition on the robustness of FEAT against individual incentive to deviate, we resort to compute the ratio between the utility of a random user if he sticks to the output of FEAT and his utility when he waterfills on all the channels. This will tell us how much incentive may a random user have to deflect from the output of FEAT, which in fact corresponds to the measure of the "distance" between the players' joint strategy profile using FEAT and a Nash equilibrium. Accordingly, a ratio close to $1$ means that users cannot deviate from FEAT and gain, whereas getting further away from $1$ corresponds to a situation with a bigger gain from individual deviation. So the closer the ratio gets to $1$ the better it is. This is done in Fig. \ref{fig:gap_K_N} for different $K$, because Prop. \ref{prop:algo_properties} and corollaries after it are talking about dependence of this "distance" from the social optimum and from NE as a function of $K$. The results seem to confirm our expectations. What we observe for $K\leq N$ is that users have the same incentive to deviate and that the decrease of the ratio is rather fast. The explanation of this is simple: When $K<N$, FEAT only assigns first $K$ users to channels -- all the remaining ones remain unassigned, which means that with probability $\frac{N-K}{N}$ they can increase their utility from 0 to some positive value, while with remaining probability the maximum increase in their utility is limited. For $K\geq N$, we observe that the distance from Nash is almost constant and dependent (in a decreasing manner) on $N$. The stabilization of it is what we expected, as explained in Remark \ref{rem:last}. The dependence on $N$ can be explained by a similar dependence of $\alpha^*_1$ (appearing in the denominator of the bound given in Prop. 2) on the number of users. This is rather expected as the degree of freedom grows with $N$ and/or $K$, resulting in increasing the users' opportunity to deviate by taking advantage of the multiuser diversity gain. To sum it up, what we remark here is that fairness does not imply \emph{a fortiori} little incentive to deviate.

\begin{figure}[t]
\centering
\vspace*{-3cm}
\hspace*{-0cm}
\includegraphics[height = 11cm,width=11cm]{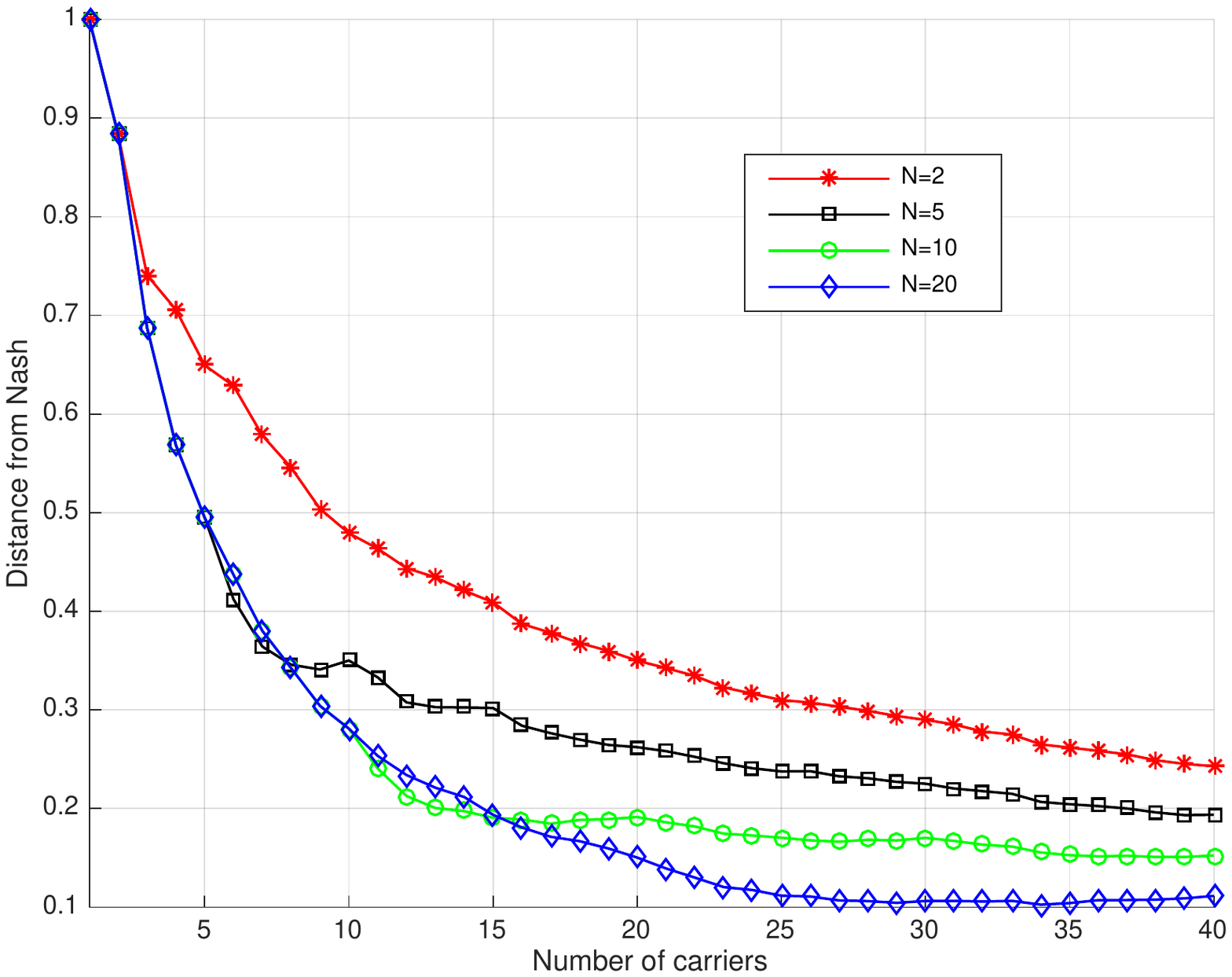}
\vspace{-3cm}
\caption{The ratio between individual utility without and with deflection from FEAT as function of $K$ for different $N$. }
\label{fig:gap_K_N}
\end{figure}

\section{Conclusion}\label{sec:conc}
We have addressed the information rate maximization problem of Gaussian sources by proposing a joint channel selection and power allocation scheme based on the water-filling algorithm. Contrary to {\it{classical}} water filling solutions in the literature, the way we have done this is by eliminating/adding channels on which we could have transmitted in the case without FEAT. In this aspect, FEAT is similar to mercury water-filling, but the way it is done is not by changing the water-filling level (as it is done with mercury water-filling), but by limiting the available carriers for each user. While maximizing the information rate, FEAT also guarantees fairness among users by trying to assign at least one good channel to every user in the system, as long as there are enough channels for every user. It has been shown that FEAT outperforms all other algorithms, especially in interference-limited systems. With FEAT, we can notably ensure a near-optimal solution with low computational complexity. In particular, FEAT does not require to implement a SIC scheme at the receiver side, as the optimal scheme does.

%\newpage

\section{appendix}

%\subsection{Proof...}
\label{proof:opt}
%\begin{IEEEproof} 
Given a power allocation strategy profile $\textbf{p}\in \mathbb{R}^{N\times K}$, for all achievable rates $\left\{R_1, R_2, \ldots, R_N \right\}$, the following inequality holds for the fundamental limit on the sum information transmission rate \cite{cover}:
\begin{eqnarray}
 \sum_{n=1}^{N} R_n & \leq & \sum_{k=1}^{K} \log_2\left(1+\frac{\displaystyle\sum_{n=1}^{N} g_{n}^k p_{n}^k}{\sigma^2}
\right)
\label{eq:util_rn}
\end{eqnarray}

On the other hand, we have that\\
\begin{eqnarray}
\sum_{n=1}^{N}\sum_{k=1}^{K} \log_2 \left(1+\frac{g_{n}^k p_{n}^k}{\sigma^2+ \displaystyle\sum_{\substack{m=1 \\m<n}}^N g_{m}^k p_{m}^k}\right)
&=&\sum_{k=1}^{K} \log_2 \left(1+\frac{g_{1}^k p_{1}^k}{\sigma^2}\right)+ 
\sum_{k=1}^{K}\log_2 \left(1+\frac{g_{2}^k p_{2}^k}{\sigma^2+ \displaystyle g_{1}^k p_{1}^k}\right)+ \cdots \nonumber\\
&&+\sum_{k=1}^{K}\log_2 \left(1+\frac{g_{N}^k p_{N}^k}{\sigma^2+ \displaystyle \sum_{m=1}^{N-1} g_{m}^k p_{m}^k}\right),\nonumber
\end{eqnarray}
which gives after some simplifications \\
\begin{eqnarray}
\sum_{n=1}^{N}\sum_{k=1}^{K} \log_2 \left(1+\frac{g_{n}^k p_{n}^k}{\sigma^2+ \displaystyle\sum_{\substack{m=1 \\m<n}}^N g_{m}^k p_{m}^k}\right)
&=& 
\sum_{k=1}^{K} \log_2\left(1+\frac{\displaystyle\sum_{n=1}^{N} g_{n}^k p_{n}^k}{\sigma^2}
\right).
\label{eq:util_unk}
\end{eqnarray}\\

Now, combining Eq. (\ref{eq:util_rn}) and Eq. (\ref{eq:util_unk}), we obtain the following inequality on the fundamental limit on the sum information transmission rate, namely
\begin{eqnarray}
 \sum_{n=1}^{N} R_n & \leq & \sum_{n=1}^{N}\sum_{k=1}^{K} \log_2 \left(1+\frac{g_{n}^k p_{n}^k}{\sigma^2+ \displaystyle\sum_{\substack{m=1 \\m<n}}^N g_{m}^k p_{m}^k}\right).
 \label{eq:util_rnn}
\end{eqnarray}

Then, the following inequality holds on the fundamental limit on the individual information transmission rate, namely

\beq
u_n(\textbf{p})=\sum_{k=1}^{K} \log_2 \left(1+\frac{g_{n}^k p_{n}^k}{\sigma^2+ \displaystyle\sum_{\substack{m=1 \\
m\neq n}}^{N} g_{m}^k p_{m}^k}\right)
\leq \sum_{k=1}^{K} \log_2 \left(1+\frac{g_{n}^k p_{n}^k}{\sigma^2+ \displaystyle\sum_{\substack{m=1 \\m<n}}^N g_{m}^k p_{m}^k}\right),
\label{eq:proof_util_1}
\eeq\\\\
because for all $(n,k) \in \left\{1, 2, \ldots, N \right\} \times \left\{1, 2, \ldots, K \right\}$\\

\beq
\log_2 \left(1+\frac{g_{n}^k p_{n}^k}{\sigma^2+ \displaystyle\sum_{\substack{m=1 \\
m\neq n}}^{N} g_{m}^k p_{m}^k}\right)
\leq 
\log_2 \left(1+\frac{g_{n}^k p_{n}^k}{\sigma^2+ \displaystyle\sum_{\substack{m=1 \\m<n}}^N g_{m}^k p_{m}^k}\right).
\label{eq:proof_util_2}
\eeq\\

\bibliographystyle{IEEEtran}
\bibliography{biblio.bib}

\end{document}